\documentclass{sig-alternate}

\usepackage{cite}

\newlength{\figtxtwid}
\setlength{\figtxtwid}{\textwidth}
\addtolength{\figtxtwid}{-0.10in}

\long\gdef\boxitnew#1{\dimen200 = \hsize \advance\dimen200 by -7pt
\begingroup\vbox{\hrule \hbox to \hsize{\vrule\kern3pt
      \vbox{\hsize \dimen200 \kern3pt#1\kern3pt}\hfil \kern3pt\vrule}\hrule}\endgroup}

\newlength{\myfigwidth}
\setlength{\myfigwidth}{\columnwidth}
\addtolength{\myfigwidth}{-2ex}  

\newcommand{\boxtext}[1]{\setlength{\myfigwidth}{\columnwidth}\addtolength{\myfigwidth}{-1ex}\boxitnew{\vspace*{.1ex}\begin{minipage}{\myfigwidth}\setlength{\parindent}{0pt}#1\end{minipage}\\[.1ex]}}

\newtheorem{theorem}{Theorem}[section]
\newtheorem{lemma}[theorem]{Lemma}
\newtheorem{corollary}[theorem]{Corollary}
\newtheorem{definition}[theorem]{Definition}

\newcommand{\ceil}[1]{\lceil #1 \rceil}

\begin{document}

\title{Faster Information Dissemination in Dynamic Networks via Network Coding\vspace*{-0.5cm}}

\numberofauthors{2} 

\author{
\alignauthor
Bernhard Haeupler\\
       \affaddr{Massachusetts Institute of Technology}\\
       \affaddr{32 Vassar Street, 32-G622}\\
       \affaddr{Cambridge, MA 02139, USA}\\
       \email{haeupler@mit.edu}
       \vspace*{-0.3cm}
\alignauthor
David R. Karger\\
       \affaddr{Massachusetts Institute of Technology}\\
       \affaddr{32 Vassar Street, 32-G592}\\
       \affaddr{Cambridge, MA 02139, USA}\\
       \email{karger@mit.edu}
       \vspace*{-0.3cm}
}

%
\conferenceinfo{PODC'11,} {June 6--8, 2011, San Jose, California, USA.}
\CopyrightYear{2011}
\crdata{978-1-4503-0719-2/11/06}
\clubpenalty=10000
\widowpenalty = 10000

\date{March 2011}

\maketitle

\begin{abstract}
  We use \emph{network coding} to improve
  the speed of distributed computation in the dynamic network model of Kuhn, Lynch and Oshman 
  [STOC '10].  In this model an adversary adaptively chooses a new
  network topology in every round, making even 
  basic distributed computations challenging.  
  
  Kuhn et al. show that
  $n$ nodes, each starting with a $d$-bit \emph{token}, can broadcast
  them to all nodes in time $O(n^2)$ using $b$-bit
  messages, where $b \geq d+\log n$. Their algorithms take
  the natural approach of \emph{token forwarding}: in every round each
	node broadcasts some particular token it knows. They prove matching
	$\Omega(n^2)$ lower bounds for a natural class of token forwarding algorithms
	and an $\Omega(n \log n)$ lower bound that applies to all token-forwarding algorithms.

We use \emph{network coding}, transmitting random linear combinations of
tokens, to break both lower bounds.  Our algorithm's performance is
\emph{quadratic} in the message size $b$, broadcasting the $n$ tokens 
in roughly $\frac{d}{b^2} \cdot n^2$ rounds. For $b=d=\Theta(\log n)$
our algorithms use $O(n^2/\log n)$ rounds, breaking the first lower bound, while for larger
message sizes we obtain linear-time algorithms. We also consider networks that
change only every $T$ rounds, and achieve an additional factor $T^2$ speedup. This contrasts with related lower and upper bounds of Kuhn et al. implying that for natural token-forwarding algorithms a speedup of $T$, but not more, can be obtained. Lastly, we give a general
way to derandomize random linear network coding, that also leads to new deterministic information dissemination algorithms. 
\end{abstract}

\vspace*{-0.15cm}
\category{F.2.2}{Analysis of Algorithms and Problem Complexity}{Nonnumerical Algorithms and Problems}[computations on discrete structures]
\category{G.2.2}{Discrete Mathematics}{Graph Theory}[network problems, graph algorithms]
\vspace*{-0.25cm}
\terms{\vspace*{-0.15cm}Algorithms, Performance, Theory}
\vspace*{-0.25cm}
\keywords{\vspace*{-0.15cm}dynamic networks, gossip, multicast, network coding}
\vspace*{-0.45cm}

\section{Introduction}

In this paper we demonstrate that \emph{network coding} can
significantly improve the efficiency of distributed computations in
dynamic networks. Network coding breaks with the classical paradigm of
routing atomic packets through a network and recognizes that
information can be mixed and coded together in ways other (physical)
quantities can not.  Network coding is a relatively recent discovery
that has already revolutionized information theory; it is
now a crucial tool in designing robust and efficient communication
protocols.  We believe network coding has potential for similar
impact in the distributed computing community.

We study the recently introduced \emph{dynamic network} model of
Kuhn et al. \cite{KLO}.  This model was designed
to capture the highly dynamic and non-converging nature of many modern
networks by allowing the network topology to change completely and
adaptively in every round subject to the constraint that the network
is always connected.  In each synchronized communication round, each node chooses a message
which is then broadcast to its neighbors for the round.
What makes this problem particularly challenging is that the broadcast is \emph{anonymous}, i.e., at the time a node
chooses its message it does not know who its receiving neighbors for the round will be.

An important problem in such dynamic networks is \emph{$k$-token   dissemination:}
there are $k$ tokens initially distributed
to some nodes, and the goal is to disseminate them to all nodes.

The most natural approach to solving token dissemination is \emph{token forwarding}:
in each round, each node chooses to broadcast one token it
knows.  Kuhn et al. \cite{KLO} show how to disseminate $k$ tokens in an $n$-node network in $O(nk)$ time
by flooding the $k$ tokens one by one in $O(n)$ rounds each. They also show how pipelining
can improve the running time of this approach to $O(\frac{nk}{T} + n)$ in slower-changing \emph{$T$-interval-connected} networks, in which for any
interval of $T$ rounds the links of some specific underlying spanning tree persist.

Kuhn et al. give evidence that this is the best one can do with token 
forwarding. For the natural class of \emph{knowledge-based} token forwarding algorithms, where each node's
messages depend only on the tokens it knows, they show a
matching $\Omega(\frac{nk}{T} + n)$ lower bound. They also give a more general
$\Omega(n \log k)$ lower bound that applies even if the algorithm is
operated under ``centralized control'' and mention in the conclusion the 
''hope to strengthen [this] and obtain an $\Omega(nk/T)$ general lower bound''.

Building on work of the first author~\cite{AnalyzingNC},
we show that these lower bounds cease to hold if one does not require that
tokens be broadcast individually.  
We use network coding, sending out random linear combinations of
tokens, to solve $k$-token dissemination of size-$O(\log n)$ tokens in $O(kn/\log n)$ time,
outperforming the $\Theta(kn)$ bound~\cite{KLO} for knowledge-based token forwarding
algorithms.  We also show that, perhaps counter-intuitively, larger
tokens can be disseminated faster: if the token size (and message
size) is $d$, network coding can disseminate $k$ tokens in $O(k(n\log
n)/d)$ time.  Thus, for tokens of size $n\log n$, we break the general
$\Omega(n\log k)$ bound on token-forwarding algorithms.   

We also consider networks that are \emph{$T$-stable}, changing only once every $T$ rounds.  Kuhn et al. show that token-forwarding can achieve a factor-$T$ speedup in this case, but that knowledge-based token-forwarding algorithms cannot do better.  In contrast, we show network coding can achieve a factor $T^2$ speedup.  

Finally, we show that linear network coding is not inherently randomized but
that the ideas and improvements carry over to (non-uniform) deterministic algorithms as well.

\section{Our Results}

In this section we provide the formal statements of our main results. The model should be clear from the introduction but is also more formally described in Section~\ref{sec:model}.  

\subsection{The Role of Message Size}

Kuhn et al. assume throughout that the message size is equal to the token size.  For token-forwarding algorithms, this is quite reasonable.  For fixed token size, a larger message simply allows forwarding more tokens at once, which for all their results is equivalent to executing multiple rounds in parallel.  Thus, all their upper and lower bounds simply scale linearly with this message size.  

Once we move beyond token forwarding this equivalence breaks down.  Thus, we introduce a separate parameter, $b$, representing the size of a message.   We will see that network coding performance improves \emph{quadratically} with the message size.  Somewhat surprisingly, this means that when the message size is equal to the token size, \emph{larger tokens can be disseminated faster}.  

Explicitly modeling $b$ also allows us to bridge an important gap between the distributed computing and network coding communities.  In distributed computing we often focus on size-$O(\log n)$ message-sizes.  But in practice, most communication protocols impose a minimum message size in the thousands or tens of thousands of bits.  We should therefore try to take advantage of the possibility of tokens being much smaller than the message size; with network coding we can.  At the other end, the network coding community generally assumes messages are so large that overheads associated with network coding can be ignored.  Our work accounts for the hidden cost of these overheads, which can be significant when messages are smaller.  
In summary, explicitly modeling $b$ lets us span the range of assumptions from distributed computing's tiny messages to network coding's huge ones.  We discuss this in more detail in Section~\ref{sec:related}.

\subsection{Token Forwarding Algorithms}

For comparison we first recall the upper- and lower-bound results of \cite{KLO}:

\begin{theorem}\cite{KLO}\label{thm:klo}
There is a deterministic knowledge-based token forwarding algorithm that solves the $k$-token dissemination problem in a $T$-stable dynamic network in $O(\frac{1}{T} \cdot \frac{nkd}{b} + n)$ rounds using messages of size $b$ for tokens of size $d$. This is tight, i.e., for any $T$, any (even randomized) knowledge-based token forwarding algorithm takes at least $\Omega(\frac{1}{T} \cdot \frac{nkd}{b} + n)$ rounds in the worst case. 
\end{theorem}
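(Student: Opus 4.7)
The plan is to prove the two directions separately. For the upper bound I would analyze a greedy knowledge-based token-forwarding strategy: in each round every node packs up to $\lfloor b/d \rfloor$ tokens into its message, preferring those it believes are least disseminated (for instance, the smallest-indexed tokens it has not yet broadcast, or the ones it has most recently learned). Because the network is $T$-stable, the topology is fixed for each epoch of $T$ consecutive rounds, so within an epoch a standard BFS-tree pipelining argument applies: once the pipeline is primed (in at most $n-1$ rounds), $\lfloor b/d \rfloor$ tokens can be delivered per round along the tree. Defining the potential $\Phi_t = \sum_v |K \setminus K_v(t)|$, where $K_v(t)$ is the token set at node $v$ after round $t$ and $K$ is the full token set, I would show that $\Phi_t$ shrinks by $\Omega(Tb/d)$ per epoch on average, giving the claimed $O(\frac{nkd}{bT} + n)$ total after accounting for the pipeline startup.

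For the lower bound, the approach is an adversarial-topology construction exploiting the knowledge-based restriction. Since a node's message is a (possibly randomized) function only of its token set, the adversary can essentially simulate the algorithm's behavior and pick the next topology to minimize progress. I would build a family of instances in which, per $T$-round stable window, the cut between ``informed'' and ``uninformed'' sides carries at most $O(b/d)$ bits of truly new information per uninformed node, because the adversary steers each message to consist of tokens the receiving neighbor already has. The invariant to maintain is that $\Omega(nk)$ (token, node) pairs must eventually be created, while each window creates at most $O(b/d)$ such pairs per node on average, yielding $\Omega(\frac{nkd}{bT})$; the additive $\Omega(n)$ term follows from a standard diameter argument on a line topology with a single distant source.

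The main obstacle is the lower bound against \emph{randomized} knowledge-based algorithms. The adversary does not see the algorithm's coin flips, yet must force a worst-case bound against every possible random choice of which $b/d$ tokens to broadcast. The technical heart is a probabilistic pigeonhole argument: when the adversary arranges for many nodes to share nearly identical token sets, a knowledge-based algorithm's random selections are too weakly correlated with the true cut to give more than $O(b/d)$ useful bits of progress per window. By contrast, the upper bound is comparatively routine, although some care is needed to handle the $+n$ term, which dominates when $k$ is small relative to $bT/d$.
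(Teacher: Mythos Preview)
The paper does not actually prove this theorem: it is attributed to \cite{KLO}, and the paragraph following the statement merely explains how the original result (stated there for $b=d=\Theta(\log n)$ and $T$-interval connectivity) extends to general $b,d$. For the upper bound the paper's remark is the simple observation that one floods batches of $b/d$ tokens, each batch taking $O(n)$ rounds, and for general $T$ one invokes the pipelining argument already in \cite{KLO}; the lower bound is asserted to carry over verbatim to the weaker $T$-stability model. So there is essentially nothing to reconstruct beyond ``batch the tokens and quote \cite{KLO}.''

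Your upper-bound plan, by contrast, has a real gap. If $\Phi_t$ drops by $\Omega(Tb/d)$ per $T$-round epoch starting from $\Phi_0 \le nk$, you need $O\!\left(\frac{nkd}{bT}\right)$ epochs and hence $O\!\left(\frac{nkd}{b}\right)$ rounds --- a factor $T$ too slow. To hit the target you would need $\Omega(T^2 b/d)$ decrease per epoch, and your BFS-tree pipelining intuition does not provide it: priming the pipeline takes depth-many rounds, but each epoch lasts only $T$ rounds and the spanning tree changes between epochs, so the priming cost cannot be amortized globally. The \cite{KLO} argument is structurally different: it fixes a batch of $\Theta(Tb/d)$ tokens and shows that over each $T$-stable window every token in the batch reaches $\Omega(T)$ new nodes along the persistent subgraph, so the whole batch finishes in $O(n)$ rounds; the factor-$T$ gain comes precisely from committing to the same batch across many windows, which a per-round greedy/potential argument does not capture. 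Your lower-bound accounting has the same kind of mismatch: ``$O(b/d)$ new pairs per node per window'' with $nk$ pairs needed yields $\Omega(Tkd/b)$ rounds, not $\Omega(nkd/(bT))$; the \cite{KLO} construction instead limits the \emph{global} number of useful transmissions, not the per-node count, and you would need to reproduce that construction rather than the cut-capacity sketch you outline.
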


This is not a verbatim restatement. Indeed, Kuhn et al \cite{KLO} prove this theorem for the related but stronger stability measure of $T$-interval connectivity. Furthermore, except for the abstract, they only describe the case of small tokens and assume that the messages size is equal to the size of the tokens, i.e., $b = d = \log n$. Lastly, for most of the paper they assume that $k=n$ or that each node starts with exactly one token. It is easy to verify that the lower bound from \cite{KLO} continues to hold for our weaker $T$-stability model and that the algorithms also directly extend to the stated theorem: E.g., to achieve a running time of $\frac{nkd}{b}$ for $T=1$ the nodes repeatedly flood $\frac{b}{d}$ tokens per $O(n)$ rounds instead of one. 

Their second lower bound applies to deterministic centralized algorithms and shows that even if one allows such unrestricted,  coordination between nodes a linear time algorithm is not achievable (in contrast to static graphs):

\begin{theorem}\cite{KLO}\label{thm:kloglobal}
For $b=d$ any deterministic centralized token forwarding algorithm that solves the $k$-token dissemination problem in a dynamic network takes $\Omega(n \log k)$ rounds in the worst case. 
\end{theorem}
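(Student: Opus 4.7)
The plan is to give an adaptive adversary argument against any fixed deterministic centralized token-forwarding algorithm $A$. The adversary watches the broadcast schedule $(m_v)_{v \in V}$ chosen by $A$ each round and picks a connected topology $G_t$ that limits how much progress the algorithm can make. The goal is to show the number of rounds is $\Omega(n \log k)$ by bounding a carefully designed potential function.

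First I would fix an initial configuration that maximizes the adversary's leverage: place each of the $k$ tokens at a distinct node and leave the remaining $n-k$ nodes empty. Next, I would define a potential $\Phi_t = \sum_{v \in V} \log_2\bigl(k + 1 - |K_v(t)|\bigr)$, where $K_v(t)$ is the set of tokens known by $v$ after round $t$. Under the initial configuration $\Phi_0 = \Theta(n \log k)$, and the algorithm can terminate only once $\Phi_t = 0$. This reduces the theorem to showing a per-round decrease of $\Phi_t - \Phi_{t+1} = O(1)$.

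The heart of the proof is the per-round adversarial construction. After seeing the labels $m_v$, I would bucket the nodes by their potential contribution (i.e., by $k - |K_v|$) and then build a connected graph, e.g.\ a path or caterpillar, that ``pairs'' broadcasters with listeners so that the vast majority of broadcasts are wasted. A broadcast $m_v$ is wasted when either (i) the listeners adjacent to $v$ already know $m_v$, or (ii) the listeners are already token-rich and so each additional token only changes $\log_2(k+1-|K_u|)$ by a vanishing amount. A pigeonhole/counting step on the $n$ broadcast slots against the $k$ token types, combined with the concavity of $\log$, should yield that the adversary can always confine the total potential drop across the $n$ potential listeners to $O(1)$ per round regardless of $A$'s choices. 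Iterating, $A$ needs $\Omega(n \log k)$ rounds.

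The main obstacle is exactly this per-round combinatorial lemma: given arbitrary broadcast labels $(m_v)$, one must exhibit a \emph{connected} $G_t$ in which only a tiny fraction of broadcasts can simultaneously reach listeners for whom the new token would meaningfully reduce the logarithmic potential. The connectivity constraint is the subtle part—it forbids simply isolating high-potential nodes—so the construction must route the required spanning structure through informational ``backwaters'' (nodes that are already full or whose neighbors already know their label), using the fact that in token forwarding each broadcast carries just one of $k$ atomic items and therefore cannot cheaply serve many distinct missing tokens at once.
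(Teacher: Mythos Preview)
This theorem is not proved in the present paper; it is stated as a result of Kuhn, Lynch, and Oshman~\cite{KLO} with no argument reproduced here, so there is no in-paper proof to compare your proposal against.

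That said, your outline does match the shape of the actual argument in~\cite{KLO}: an adaptive adversary, a logarithmic potential on missing tokens per node with $\Phi_0=\Theta(n\log k)$ and $\Phi_{\mathrm{end}}=0$, and a per-round connected topology chosen after seeing the broadcasts so as to keep $\Phi_t-\Phi_{t+1}=O(1)$. You also correctly isolate the crux as the per-round combinatorial lemma under the connectivity constraint. What you have written, however, is a plan rather than a proof. The sentence ``a pigeonhole/counting step \ldots\ combined with the concavity of $\log$ should yield \ldots'' is precisely the step that carries all the weight, and you have supplied neither the construction nor the calculation. In particular, a node receives from \emph{all} of its neighbors in a round, so bounding the drop requires controlling the entire incidence structure, not just which token each node sends; the ``bucket by potential and pair broadcasters with listeners'' description is too loose to establish that a \emph{connected} graph with $O(1)$ total drop always exists. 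Until that lemma is actually proved, the argument is incomplete.
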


\subsection{Network Coding}

Even though the token dissemination problem is about delivering complete tokens, one can benefit from not treating the information as a physical quantity that needs to be routed through the network. We do this by providing faster (knowledge-based) algorithms for the $k$-token dissemination problem based on network coding.  The lower bound in Theorem~\ref{thm:klo} pertains even if one allows the algorithms to chop up tokens into single bits and route those bits independently through the network -- including concatenating bits of different tokens within one message. This shows that true (network) coding is required. 

Our algorithms use \emph{random linear network coding}, the arguably simplest form of network coding, in which messages are random linear combinations of tokens. Independent of network dynamics, nodes in our algorithm always choose a uniformly random linear combination of all received messages  and can therefore also be considered knowledge-based.

Our first theorem shows that one can solve $k$-token dissemination  roughly a factor of $b$ faster than the lower bound for knowledge-based token forwarding algorithms:

\begin{theorem}\label{thm:bandwidth}
There is a randomized network coding algorithm that solves the $k$-token dissemination problem in a dynamic network with $n$ nodes in  
$$O(\min\{ \frac{1}{b} \cdot \frac{nkd}{b} + nb, \frac{\log n}{b} \cdot \frac{nkd}{b} + n \log n\})$$
rounds with high probability.
\end{theorem}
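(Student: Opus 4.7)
The plan is to use random linear network coding. Each node $u$ maintains the set of coded symbols it has received — each a pair consisting of a coefficient vector in $\mathbb{F}_q^k$ and a $d$-bit combined payload — and these coefficient vectors span a subspace $V_u \subseteq \mathbb{F}_q^k$. In every round each node broadcasts a $b$-bit message consisting of as many random $\mathbb{F}_q$-linear combinations of its stored symbols as will fit (each combination carrying a $d$-bit payload plus a coefficient vector, which one may either transmit explicitly, or compress to a short pseudorandom seed of $\log q$ bits). Once $\dim V_u = k$, the node Gaussian-eliminates to recover all $k$ tokens. The two bounds in the theorem correspond to two natural choices of field size: $q = 2^{\Theta(b)}$ gives bound (a), in which field elements are matched to message words, and $q = \mathrm{poly}(n)$ gives bound (b), the standard distributed-computing regime where $\log q = \Theta(\log n)$.

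The heart of the proof is a progress lemma for this algorithm on dynamic graphs, obtained by adapting the rank-potential argument of \cite{AnalyzingNC}. Let $\Phi = \sum_u \dim V_u$, which lies between $0$ and $nk$, with the protocol terminating once $\Phi = nk$. I would show that while $\Phi < nk$, $\Phi$ grows by at least a rate $R$ per round w.h.p., with $R = \Theta(b^2/d)$ in regime (a) and $R = \Theta(b^2/(d \log n))$ in regime (b). The argument is cut-based: in a given round, order the nodes by $\dim V_u$ and observe that by connectivity there must exist a communication edge from some node with larger $\dim$ to some node with strictly smaller $\dim$; each of the $\Theta(b/(d + \log q))$ random combinations sent across such an edge lies outside the receiver's subspace with probability $1 - 1/q^{\Omega(1)}$ by a Schwartz–Zippel bound. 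Summing the expected rank increments over parallel cut-crossings and coded symbols per message should yield the rate $R$, giving $O(nk/R)$ rounds and matching the bounds in the theorem.

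The main obstacle will be extracting the full factor-$b$ improvement over the token-forwarding bound $O(nkd/b)$ in this accounting. A naive cut argument counting one rank increment per random combination sent would only recover the token-forwarding bound; the extra factor of $b$ requires showing that many level boundaries in the rank-layering contribute independent increments within the same round. This kind of parallel-cut amortization is exactly what the potential-function machinery of \cite{AnalyzingNC} was built to handle in static gossip, and my plan is to lift it from the static to the dynamic setting rather than reinvent it, exploiting that only one-round connectivity (not a persistent subgraph) is needed to identify the cut in each round. The additive terms $nb$ and $n\log n$ in the theorem then arise from a warm-up period during which $\Phi$ grows from $0$ to a constant fraction of its eventual per-round rate, together with the slack needed to union-bound Schwartz–Zippel failures over all rounds.
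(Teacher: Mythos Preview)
Your plan has a genuine gap at the point where you write that each coded symbol carries ``a coefficient vector, which one may either transmit explicitly, or compress to a short pseudorandom seed of $\log q$ bits.'' The coefficient vector is $k$-dimensional over $\mathbb{F}_q$, so transmitting it explicitly costs $k\log q$ bits per symbol, not $\log q$. The seed idea does not rescue this: a receiver can only interpret a seed relative to the sender's current basis, which the receiver does not know, and in any case the tokens are not indexed to begin with (the paper stresses this in Section~\ref{sec:related}), so there is no global coordinate system in which a $k$-tuple of coefficients even makes sense until someone builds one. With the honest $k\log q + d$ bits per symbol, a $b$-bit message holds only $\Theta(b/(k\log q + d))$ symbols, and your cut/level-set accounting cannot produce the rate $R=\Theta(b^2/d)$; at best you recover the token-forwarding bound. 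The quadratic-in-$b$ improvement is not something the potential argument of \cite{AnalyzingNC} gives you for free.

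The paper obtains the $b^2$ factor by a completely different mechanism: it first \emph{gathers} many tokens to a single node (via the {\tt random-forward} subroutine and Lemma~\ref{lem:randomforward}, which shows $\Theta(\sqrt{bk/d})$ tokens accumulate somewhere after $O(n)$ rounds), indexes them there trivially, packs them into $\Theta(b)$ blocks of $\Theta(b/d)$ tokens each, and only then runs the $O(n)$-round indexed broadcast of Lemma~\ref{lem:simpleNC} on those $\Theta(b)$ blocks. That is how $\Theta(b^2/d)$ tokens get disseminated per $O(n)$-round phase. The two terms inside the $\min$ come from two different algorithms, {\tt greedy-forward} (Theorem~\ref{thm:alg1}) and {\tt priority-forward} (Theorem~\ref{thm:random-spreading-extended}), and the additive $nb$ and $n\log n$ are not warm-up slack but the cost of the \emph{final} phases when too few tokens remain to fill a full batch. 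Your proposal is missing the gathering/indexing/blocking layer entirely, and without it the bound is not attainable.
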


This means that the efficiency of token-dissemination increases at least \emph{quadratically} with the message size, instead of the more intuitive linear increase given by Theorem~\ref{thm:klo}. A similar result is true for the advantages coming from more stable networks. Theorem~\ref{thm:klo} implies that $T$-stability (or even $T$-interval connectivity) allows for a speed up of $T$. 
Our next theorem shows that with network coding the speedup of a more stable network improves to $T^2$. For most parameter values, this improvement can be combined with the speed-up from larger message sizes. The next theorem implies an at least $\frac{log^2 n}{bT^2}$ speed up over the $O(\frac{nkd}{b})$ rounds for most settings of the parameters $b,d,k$ and $T$. This is a factor of $\frac{log^2 n}{bT}$ faster than the lower bound for knowledge-based token-forwarding algorithms: 

\begin{theorem}\label{thm:stable}
There is a randomized network coding algorithm that solves the $k$-token dissemination problem in a $T$-stable dynamic network with $n$ nodes in
$$\begin{array}{llll}
O(1)\cdot\min\Big\{\vspace{0.2cm}
  &\frac{\log n}{bT^2} \ \cdot \ \frac{nkd}{b} \ \  &+ \ \ nbT^2\log n&\\ \vspace{0.2cm}
  &\frac{\log^2 n}{bT^2} \ \cdot \ \frac{nkd}{b} \ \  &+ \ \ nT\log^2 n&\\ 
  &\frac{\log^2 n}{bT^2} \ \cdot \ \ n^2 \ \  &+ \ \ n\log n& \Big\}
\end{array}
$$
rounds with high probability.
\end{theorem}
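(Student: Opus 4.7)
The plan is to extend the random linear network coding (RLNC) scheme underlying Theorem~\ref{thm:bandwidth} so that its analysis can exploit the $T$ consecutive rounds during which the topology stays frozen. At the algorithm level, I would have each node broadcast in every round a uniformly random linear combination (over a finite field $\mathbb{F}_q$ with $q=\mathrm{poly}(n)$) of the coded symbols it has received so far; with message size $b$ and token size $d$ I would batch $\Theta(b/d)$ independent combinations per message. This accounts for one factor of $b$ in the denominator of the first subterm, and the field choice keeps coefficient-vector headers at $O(k\log n)$ bits, which in turn drives the three different regimes appearing in the $\min$.

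The core of the proof is a cut-based, phase-by-phase potential argument in the style of \cite{AnalyzingNC}. For each round $t$ let $V_v(t)\subseteq\mathbb{F}_q^k$ be the subspace spanned by the coefficient vectors known to node $v$, and define the global potential $\Phi(t)=\sum_v (k-\dim V_v(t))$. The key technical step is a \emph{phase lemma}: during a single $T$-round window in which the topology is an arbitrary connected graph $G$, every cut $(S,\bar S)$ whose current rank deficit at $\bar S$ is $\delta_S>0$ sees that deficit reduced by $\Omega(\min(\delta_S,T))$, with probability at least $1-n^{-c}$ for any desired constant $c$. The $\Omega(T)$ term is exactly where the new factor of $T$ beyond token forwarding comes from: because the topology is frozen for $T$ rounds, each of those rounds sends an independent random combination across every cut edge, and a Schwartz--Zippel--style argument shows that $\Omega(T)$ of those combinations are jointly linearly independent provided $|\mathbb{F}_q|$ is polynomial in $n$.

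Given the phase lemma, the accounting step is to sum the $\Phi$-drops over phases. The leading $T^2$ improvement comes from combining two effects that reinforce one another only when the graph is stable: each phase reduces $\Phi$ along any not-yet-saturated cut by $\Omega(T\cdot b/d)$ (one $T$ from the phase lemma, the $b/d$ from batching), and a tree-of-cuts (or multi-source flooding) argument shows that information needs to traverse the network's diameter only once per $T$-phase, collapsing an $O(n)$ additive factor to $O(n/T)$. Multiplying the two gives the dominant $\frac{nkd}{b^2T^2}$ term. The three lines inside the minimum correspond to three variants of the same scheme that trade coefficient-vector overhead against additive diameter terms: the first uses plain RLNC with full coefficient headers, the second uses a finer-grained analysis that saves a $\log n$ on the diameter term, and the third is a sparse-coding variant that replaces the information volume $kd$ by $n$ in regimes where coefficient headers would otherwise dominate.

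The main obstacle I foresee is the phase lemma. The intuition is clean -- $T$ independent random vectors drawn from a subspace of dimension at least $T$ are almost surely linearly independent -- but making it rigorous requires tracking the simultaneous mixing that is happening inside both $S$ and $\bar S$ while the $T$ cross-cut random combinations accumulate, and doing so uniformly over the exponentially many cuts via a careful union bound. Once the phase lemma is in hand, the three explicit bounds drop out by choosing the field size, batching factor, and coding variant appropriate to each regime.
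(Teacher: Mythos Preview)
Your proposal has genuine gaps and diverges substantially from the paper's approach.

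First, the paper's algorithm is not plain per-round RLNC. The core primitive (Lemma~\ref{lem:stabilityNC}) is a structured \emph{share--pass--share} protocol built on an explicit partition of the graph into \emph{patches} of size and diameter $\Theta(T/\log n)$, computed each $T$-window via Luby's MIS on the $D$-th power graph. Within a patch, nodes use pipelining over the $T$ stable rounds to agree on a single joint random linear combination of size-$bT$ vectors; that combination is passed to neighboring patches and re-shared. The $T^2$ speedup has a concrete decomposition: one factor of $T$ comes from using the $T$ stable rounds to pipeline a virtual message of size $bT$ (so $bT$ blocks of $bT$ bits can be coded together, giving the $b^2T^2$ in the denominator), and the second factor of $T$ comes from the fact that one meta-round makes an entire patch of $\Omega(T/\log n)$ new nodes sense any given $\mu$ simultaneously.

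Your phase lemma, as stated, does not produce this. Getting $\Omega(T)$ independent combinations across a single cut edge during a stable window is fine, but summing rank drops over cuts does not yield a $T^2$ drop in $\Phi$: on a path (the adversary's canonical choice) plain per-round RLNC over a $T$-window increases $\sum_v \dim V_v$ by only $\Theta(T^2)$ in total, since a node at distance $i$ from any informed node gains at most $T-i$ ranks. Your ``diameter collapses from $O(n)$ to $O(n/T)$'' step is exactly where the missing factor hides, and it does not follow from a cut argument; in the paper it is \emph{engineered} via the patch construction and the within-patch pipelined aggregation.

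Second, your interpretation of the three lines in the $\min$ is incorrect. They are not three coding variants with different header/field-size tradeoffs. All three use the same $bT$-indexed-broadcast primitive of Lemma~\ref{lem:stabilityNC}; the three bounds arise from three different \emph{gathering} subroutines that assemble tokens into size-$bT$ blocks before broadcasting: {\tt greedy-forward} gives the first line, {\tt priority-forward} the second, and a patch-based gathering (pipeline tokens to patch leaders, producing $O(n\log n/T + kd/bT)$ blocks) gives the third. Without such a gathering step you cannot invoke the $bT$-indexed-broadcast at all, and the $O(k\log n)$-bit header you propose would already exceed the message size $b$ for nontrivial $k$.
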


All these algorithms are based on random linear network coding which seems to be inherently dependent on randomization. We show that this is not true. We give tight trade-offs between the adaptiveness of the adversary and the required coefficient size/overhead.  For derandomization we must pay higher (quadratic) coefficient overhead, but we can still outperform token-forwarding algorithms.
These arguments apply quite generally to the network coding framework in~\cite{AnalyzingNC} and are interesting on their own. We defer the description of these results to Section~\ref{sec:det} and mention here only the implications for the $k$-dissemination problem: 


\begin{theorem}\label{thm:stabledeterministic}
There is a deterministic network coding algorithm that solves the $k$-token dissemination problem in a $T$-stable dynamic network with $n$ nodes in  
$$O(\frac{1}{\sqrt{bT}} \cdot n \cdot \min\{k,\frac{n}{T}\} + n) \cdot 2^{O(\sqrt{\log n})}$$
rounds.
\end{theorem}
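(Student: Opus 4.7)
The plan is to derandomize the random linear network coding algorithm underlying Theorem~\ref{thm:stable}. In the randomized version each node transmits a uniformly random linear combination (over a finite field of size $q$) of its previously received messages with $q = \mathrm{poly}(n)$, so the $\log q = \Theta(\log n)$ bits per coefficient are responsible for the $\log n$ factors appearing in the numerator of that bound. I would replace the random coefficients by an explicit, deterministic schedule (indexed by round and node identifier) drawn from a structured family -- for instance Vandermonde- or Reed--Solomon-type codewords -- over a field large enough to defeat any topology sequence the $T$-stable adversary can produce. The modified encoder then plugs directly into the same subspace/rank-growth framework of~\cite{AnalyzingNC} that drives the proof of Theorem~\ref{thm:stable}.

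The central quantitative step is a tight trade-off between the adversary's adaptiveness and the required coefficient field size. Because the topology is frozen inside each $T$-stable interval, the adversary's effective power per interval is limited: only a controlled number of coefficient assignments can fail to enlarge some target node's decoded subspace. Formalizing this via a counting/Schwartz--Zippel-style argument over the algebraic constraints induced by all possible $T$-stable topologies, and balancing the number of ``bad'' configurations against the number of available coefficient vectors, yields a required field size of $q = 2^{\Theta(\sqrt{bT})} \cdot 2^{O(\sqrt{\log n})}$, so each coefficient costs $\Theta(\sqrt{bT})$ bits rather than $\Theta(\log n)$. Substituting this enlarged per-coefficient cost into the analysis of Theorem~\ref{thm:stable} replaces the $\log n$ factors in the numerator by $\sqrt{bT}$; after simplification this reproduces exactly the claimed $\frac{1}{\sqrt{bT}} \cdot n \cdot \min\{k, n/T\}$ running time, with the $2^{O(\sqrt{\log n})}$ slack absorbed at the end.

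The main obstacle will be establishing this trade-off tightly. Because the adversary can observe the public deterministic schedule and react to it, one cannot argue ``with high probability'' -- the chosen codewords must simultaneously work against \emph{every} $T$-stable topology sequence, not just a random one. This demands an explicit family of coefficient vectors in general position with respect to all adversary-induced decoding constraints, which is precisely the general derandomization statement alluded to in the introduction. The $2^{O(\sqrt{\log n})}$ slack reflects the overhead of making the required combinatorial object fully explicit (e.g.\ via concatenation of small algebraic codes or a bounded-independence generator). Once such a coefficient schedule is in place, the remaining work -- proving per-interval rank growth deterministically and summing over $T$-stable windows to bound the total dissemination time -- mirrors the randomized analysis essentially verbatim.
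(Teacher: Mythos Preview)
Your proposal misidentifies both where the $\sqrt{bT}$ and where the $2^{O(\sqrt{\log n})}$ come from, and it skips the step that actually forces the latter.

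In the paper, the $\sqrt{bT}$ does not arise from a per-coefficient cost of $\Theta(\sqrt{bT})$ bits obtained by a clever trade-off against the $T$-stable adversary. It arises because the general derandomization (Section~\ref{sec:det}) gives a \emph{quadratic} coefficient overhead: coding $k$ indexed tokens deterministically needs headers of size $k^2\log n$ rather than $k\log q$. With size-$bT$ blocks this means only $\sqrt{bT/\log n}$ blocks can be coded together (Lemma~\ref{lem:stabilityNCdet}), instead of $bT$ in the randomized version; that is what turns the $bT$ speedup into a $\sqrt{bT}$ speedup. There is no field-size choice of the form $q=2^{\Theta(\sqrt{bT})}$ in the argument, and your Schwartz--Zippel counting story for that specific exponent has no obvious justification.

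The $2^{O(\sqrt{\log n})}$ factor is not slack from making a coefficient family explicit. It is the running time $MIS(n)$ of the Panconesi--Srinivasan deterministic maximal-independent-set algorithm, which is needed twice: once to build the patches for the $T$-stable indexed broadcast, and once for the deterministic \emph{gathering} step. This second use is the real gap in your plan. You propose to plug a deterministic coefficient schedule into ``the analysis of Theorem~\ref{thm:stable} essentially verbatim,'' but two of the three gathering methods behind that theorem rely on {\tt random-forward}, which is a randomized token-forwarding primitive and does not derandomize (indeed the token-forwarding lower bounds suggest it cannot). The paper instead uses only the third, patch-based gathering method, which requires a deterministic MIS computation and is what produces the $\min\{k,n/T\}$ term and the $MIS(n)$ overhead. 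Without addressing gathering, your derandomization does not yield a complete $k$-token dissemination algorithm.
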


For completeness we also describe what our findings imply for centralized algorithms\footnote{A centralized algorithm can globally coordinate nodes. Formally we define centralized algorithms as ``distributed'' algorithms that furthermore provide each node with knowledge about past topologies, the initial token distribution (without getting to know the tokens itself) and a source of shared randomness in case of a randomized algorithm. It is easy to verify that this extends the definition given in \cite{KLO} for centralized token-forwarding algorithms to general algorithms and problems.}:

\begin{corollary}\label{cor:global}
There is a randomized centralized network coding algorithm that solves the $k$-token dissemination problem in a $T$-stable dynamic network with $n$ nodes in  
order-optimal $\Theta(n)$ time with probability $1-2^{-n}$ and a deterministic centralized network coding algorithm that runs in $O(\frac{\log n}{bT} \cdot n \cdot \min\{k,\frac{n}{T} \} + n)$ rounds.
\end{corollary}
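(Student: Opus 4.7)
The plan is to derive both parts of the corollary from the machinery of the earlier theorems, with the centralized model letting us strip out the overheads that arose from distributed coordination. For the randomized bound, the lower bound $\Omega(n)$ is immediate since a connected $n$-node graph can have diameter $n-1$ and a token originating at one end must spend $\Omega(n)$ rounds in transit. For the matching upper bound the key observation is that in the centralized model a shared source of randomness lets all nodes agree a priori on the random coefficients used in each round, so coefficient vectors need not be transmitted and none of the $b$ bits of a message is consumed by overhead. We can therefore run random linear network coding in a field of size $q = 2^{\Theta(n)}$ with the full $b$ bits of each transmission carrying coded payload. Plugging this into the analysis of random linear network coding from~\cite{AnalyzingNC}, after $O(n)$ rounds (proportional to the diameter) every node accumulates at least $k$ combinations along disjoint time-paths, and a Schwartz--Zippel-type argument bounds the failure probability by $O(nk/q) = 2^{-\Omega(n)}$.

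For the deterministic bound I would start from the randomized schedule underlying Theorem~\ref{thm:stable} and replace the random coefficient choice with a deterministic one. Theorem~\ref{thm:stabledeterministic} already derandomizes the same framework but pays a $2^{O(\sqrt{\log n})}$ multiplicative overhead for constructing good coefficients in the distributed setting, since each node can only use local information to pick its outgoing coefficients. In the centralized model this construction need not be distributed: knowing the entire past and future topology, we can run an algebraic, Jaggi--Sanders-style network code construction on the time-unrolled dynamic network, obtaining deterministic coefficients over a field of size $\mathrm{poly}(n)$ that guarantee full rank at every destination. This collapses the $2^{O(\sqrt{\log n})}$ overhead to an $O(\log n)$ factor, and feeding it back into the schedule of Theorem~\ref{thm:stable} (choosing between flooding token-by-token and saturating the min-cut) yields the claimed $O\bigl(\frac{\log n}{bT}\cdot n \cdot \min\{k, n/T\} + n\bigr)$ bound.

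The main obstacle is verifying that the algebraic derandomization actually applies in the dynamic, $T$-stable setting, since the classical statements cover only static multicast. The natural remedy is to time-unroll the network into a static layered DAG---one copy of every physical node per round, with intra-round edges given by the current topology---and apply the standard deterministic construction on this enlarged graph; the $T$-stability assumption lets us lower-bound the relevant min-cuts in the unrolled DAG block by block, which is what drives the $\min\{k, n/T\}$ term. A secondary issue, which I expect to be routine but error-prone, is verifying that the two regimes in the $\min$ combine correctly and that no extra $\log$-factors sneak in when the coefficient field size $\mathrm{poly}(n)$ is substituted back into the Theorem~\ref{thm:stable} round count.
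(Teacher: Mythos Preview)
Your randomized argument is essentially what the paper does: with shared randomness and knowledge of past topologies, every node can reconstruct the coefficient vector attached to any message without it being sent, so the full $b$ bits carry payload and the basic $O(n)$-round network-coding analysis (Lemma~\ref{lem:simpleNC}) applies directly with failure probability $q^{-n}$. A minor quibble: there is no need to blow the field up to $q=2^{\Theta(n)}$; field size $2$ already gives the stated $1-2^{-n}$ guarantee, and an $\Theta(n)$-bit field element would not fit in a $b$-bit message anyway.

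The deterministic argument has a real gap. You write ``knowing the entire past and future topology'' and then apply a Jaggi--Sanders-style construction to the time-unrolled DAG. But the centralized model in this paper gives nodes only the \emph{past} topologies; the adversary remains adaptive and, against a deterministic algorithm, effectively omniscient since it can simulate all future transmissions. There is therefore no single unrolled DAG to build a code for: the coefficients must work simultaneously against \emph{every} adaptive schedule. This is precisely the content of Theorem~\ref{thm:NComniscient} and Corollary~\ref{col:NCdeterministic}, and it forces the field size up to $q=n^{\Omega(k')}$ when $k'$ blocks are coded together, not $\mathrm{poly}(n)$. The paper's route is to accept this: one field element then occupies $k'\log n$ bits, so within the $bT$-bit effective message of the patch-sharing scheme only $k'=bT/\log n$ blocks can be coded per phase. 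Combined with trivial centralized MIS (which is what actually removes the $2^{O(\sqrt{\log n})}$ factor---that factor came from Panconesi--Srinivasan, not from coefficient construction) and the patch-based gathering that reduces the instance to $\min\{k,O(n/T)\}$ super-blocks, the number of $O(n)$-round phases is $\frac{\log n}{bT}\min\{k,n/T\}$, giving the stated bound. In short, the $\log n$ in the corollary is the price of derandomizing against an adaptive adversary via a large field, not a residue of a distributed coefficient search that centralization could eliminate.
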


To help interpret these general results we present a few interesting value instantiations:
\begin{itemize}
	\item Even for $b=d=\log n$ and $k=n$, which is an important case because of its connection to counting the number of nodes in a network~\cite{KLO}, the $n^2/\log n$ rounds needed by the network coding algorithm is a $\Theta(\log n)$-factor faster than any knowledge-based token forwarding algorithm can be. 
	\item For the counting problem with larger message sizes, i.e., $d = \log n$ and  $k=n$, Theorem~\ref{thm:bandwidth} implies that a message-size of $b = \sqrt{n} \log n$ suffices to obtain an optimal linear-time randomized algorithm. For $b = n^{2/3} \log n$ this can be made deterministic. In contrast, the best known token-forwarding algorithm needs $b= n \log n$ (see Proposition 3.2 of \cite{KLO}) which is tight for knowledge-based token forwarding algorithms. 
	\item The situation is similar if one considers the question of how stable a graph needs to be to allow near-linear $n^{1+o(1)}$ time algorithms for the $n$-token dissemination problem. Theorems \ref{thm:stable} and \ref{thm:stabledeterministic} show that $T = \Omega(\sqrt{n})$ suffices for randomized algorithms and $T = \Omega(n^{2/3})$ for deterministic algorithms. This means that $\sqrt{n}$ (resp. $n^{1/3}$) adversarial topology changes can be tolerated with network coding. In contrast any knowledge-based token-forwarding algorithm requires the graph to be essentially static, i.e., $T = \Omega(n^{1-o(1)})$.
	\item 
For the case that messages are of the size of a token, i.e, $b = d$, the weaker but quite general lower bound for Theorem~\ref{thm:kloglobal} rules out any linear time token forwarding algorithm even if a deterministic centralized algorithm is used. In contrast to this there are linear time network coding algorithms that are: \vspace{-0.1cm}
	\begin{itemize}
		\item randomized and centralized 
		\item deterministic and centralized\\(for message and token sizes $\geq n \log n$)
    \item randomized and knowledge-based\\(for message and token sizes $\geq n \log n$)
		\item deterministic and knowledge-based\\(for message and token sizes $\geq n^2 \log n$)
	\end{itemize}
\end{itemize}

\section{Related Work}
\label{sec:related}
While traditional distributed algorithms research has focused on computation in 
static networks, the analysis of dynamic network topologies has gained importance both in practice and theory. 
Kuhn et al.~\cite{KLO} offer an extensive review of this literature. 

Next to \cite{KLO} the line of research most relevant to this work is network coding for gossip problems~\cite{informationdissemination05,debmed06transinf,borokhovich2010tight,mosk2006information}
and most specifically work by Haeupler~\cite{AnalyzingNC}. Since its introduction~\cite{ahlswede2000network,li2003linear}
network coding has revolutionized the understanding
of information flow in networks and found many practical applications (see, e.g., the books \cite{yeung2008information,ho2008network}).

Random linear network coding and its distributed implementation considered in this paper
were introduced by Ho et al.~\cite{ho2006random} and shown to achieve capacity for multicast.
Its performance for the distributed $n$-token dissemination problem has been intensively studied
in combination with gossip algorithms under the name of algebraic gossip or rumor spreading. 
The first such analysis~\cite{informationdissemination05,debmed06transinf} studied  the 
performance of algebraic gossip in the random phone call model, i.e., the complete graph in which 
each nodes sends a message to a random neighbor in each round.  Follow-on work~\cite{borokhovich2010tight,mosk2006information,avin2011podc,AnalyzingNC} has analyzed 
the distributed network coding gossip algorithm on general static networks. Haeupler~\cite{AnalyzingNC}
gives a very simple analysis technique (reviewed in Section~\ref{sec:NC})  that can be used to 
show order optimal stopping times in practically all communication models. Most interestingly
this holds true even if, as studied here and in \cite{KLO}, a fully adaptive adversary changes the topology in every
round. In the setting considered here this would imply an optimal $O(n)$ linear time algorithm for the
$n$-token dissemination problem. Unfortunately, these prior results do not directly
apply for two subtle but important reasons: 

First, \cite{AnalyzingNC}, as well as all prior work on algebraic gossip, assumes that
the additive overhead of the network coding header, which is linear in the number
of coded packets, is negligible compared to the size of a packet. This assumption is backed up by many practical implementations in which
this overhead is indeed less than one percent.  But a rigorous theoretical treatment, like that of~\cite{KLO}, must account for this overhead which may be significant if message-sizes are small.

Secondly, in all prior literature including \cite{AnalyzingNC}, it is also assumed that tokens are uniquely 
numbered/indexed and that this index is known to any node that starts with a token. This is needed to allow nodes
to specify in the coding header which packets are coded together in a message. In this paper such an assumption
would be unacceptable. For example, for the task of counting the number of nodes in a dynamic network \cite{KLO}
having the IDs consecutively indexed would essentially amount to assuming that a solution to the counting
problem is already part of the input. 

In this paper we address both points explicitly. Accounting for the coding overhead leads to interesting trade-offs
and poses new algorithmic challenges like the need for \emph{gathering} many tokens in one node so that
they can be grouped together to a smaller number of larger ``meta-tokens'' that require fewer coefficients.  To this end we consider
intermediate message sizes $b$ that can range between logarithmic size~\cite{KLO} to 
(super)linear size~\cite{informationdissemination05,debmed06transinf,borokhovich2010tight,mosk2006information,avin2011podc,AnalyzingNC}.
We furthermore do not assume any token indexing or other extra coordination between nodes but show how to bootstrap
the token dissemination algorithms to find such an indexing.

\section{Problem Description}\label{sec:model}

Throughout this paper we work in the dynamic network model of Kuhn et
al. \cite{KLO}.  The following section gives a detailed description of
the model and of the token dissemination problem.

\subsection{The Dynamic Network Model}

A \emph{dynamic network} consists of $n$ nodes with unique identifiers
(UIDs) of size $O(\log n)$ and we assume that the number of nodes is known (up to a factor of 2)
to all nodes.
The network operates in synchronized \emph{rounds}.
During each round $t$ the network's connectivity is defined by a
connected undirected graph $G(t)$ chosen by an adversary. The nodes
communicate via \emph{anonymous broadcast}: At the beginning of a
round each node chooses an $O(b)$-bit message, where $b \geq \log n$, without knowing to which
nodes it is connected in the round. After the messages and the
network $G(t)$ is fixed each node receives all messages chosen by its
neighbors in $G(t)$. The model does not restrict local computations
done by nodes.

We present deterministic and randomized algorithms. In the case of
randomization one must carefully specify how the adversary is allowed
to adapt to algorithmic actions.  We cover several models in the full paper but here we assume an \emph{adaptive adversary}:
in each round the adversary chooses the network
topology based on all \emph{past actions} (and the current state) of
the nodes. Following this the nodes then choose random messages (still
without knowing their neighbors).\\

\vspace{-0.15cm}
{\bfseries Remarks:}
\vspace{-0.15cm}
\begin{itemize}
	\item For randomized algorithms the assumption of $O(\log n)$
          size UIDs is without loss of generality since they can
          be generated randomly with a high probability of success.\vspace{-0.15cm}
	\item In the case of $n$-token dissemination the assumption that all
        nodes know $n$ is without loss of generality: If $n$ is
        unknown one can start with guessing an upper bound $n=2$,
        count the number of node IDs using $n$-token dissemination
        and repeatedly double the estimate an restart when a failure
        is detected. This use of the $n$-token dissemination prevents a termination
        with a too small estimate. Since the running times only depend
        (at least linearly) on the size of the estimate, all rounds spend on computations with 
        too low estimates are dominated by a geometric sum and increase the 
        overall complexity at most by a factor of two. A similar 
        argument was given in \cite{KLO}. We defer more details to the full paper. 
\end{itemize}

\subsection{The $k$-Token Dissemination Problem}

In this section we describe the \emph{$k$-token dissemination problem}
\cite{KLO}.  In this problem, $k \leq n$
\emph{tokens} of $d \leq b$ bits are located in the network and the goal is
for all nodes to become aware of the union of the tokens and then
terminate.  We assume that the $k$ tokens are chosen and distributed
to the nodes by the adversary before the first round.

Kuhn et al. observe that $k$-token dissemination seems intimately
connected to the problem of counting the number of nodes in a network
and to simpler problems like consensus.  In fact $k$-dissemination is
``universal'' as any function of the $k$ tokens can be computed by
distributing them to all nodes and the letting each node compute the
function locally.

We consider only \emph{Las Vegas} algorithms that are guaranteed to terminate with all tokens disseminated. We will bound the expected
number of rounds until all nodes terminate. All stopping times actually
hold with high probability.

Our algorithms for $k$-token dissemination solve several natural
subproblems as subroutines:\vspace{-0.1cm}
\begin{description}
	\item[gathering:] nodes need to collect tokens such that a single 
node or a small collection of nodes knows about a specified number of tokens. \vspace{-0.1cm}
	\item[$k$-indexing:] $k$ tokens must be selected
and a distinct \emph{index} in the range $1,\ldots,k$ assigned to each.\vspace{-0.1cm}
	\item[$k$-indexed-broadcasting:] $k$ 
          tokens with distinct indices $1,\ldots,k$ must be
          distributed to all nodes 
\end{description}

\section{(Analyzing) Network Coding}\label{sec:NC}


\subsection{Random Linear Network Coding}

Instead of sending the $d$-bit tokens as atomic
entities, network coding interprets these tokens as vectors over a
finite field and sends out random linear combinations of the
vectors. Formally, the algorithm chooses a prime $q$ as a field
size and represents the tokens as $d' = \ceil{d / \lg q}$-dimensional vectors over $F_q$.  For most of this paper one can choose
$q=2$, i.e., take the natural token representation as a bit sequence
of length $d' = d$ and replace linear combinations by XORs.

Let $t_1,\ldots,t_k \in F_q^{d'}$ be $k$ indexed tokens.  We
concatenate the $i^{th}$ basis vector $e_i$ of $F_q^{k}$ to $t_i$ to
produce a $k + d'$-dimensional vector $v_i$. Each
node that initially knows $t_i$ ``receives'' this vector $v_i$ before the first round.
Notice that if a node knows the \emph{subspace $S$ spanned by the $v_i$}, e.g, in
the form of any basis of $S$, it can use Gaussian elimination to 
reconstruct the $v_i$, and thus the original tokens. Thus, we solve $k$-indexed-broadcast by 
delivering to every node a set of vectors that span $S$. The algorithm is straightforward:
At each round, any node computes a \emph{random linear combination} of any 
vectors received so far (if any) and broadcasts this as a message to its (unknown) neighbors.
Note that the message only depends on the current knowledge of the tokens, i.e., the subspace
spanned by the received vectors. This natural property was called knowledge-based in \cite{KLO}.

\subsection{Advantages of Network Coding}\label{sec:advantageNC}
\label{sec:coding-vs-forwarding}

To contrast network coding  with token forwarding,
consider the simplified setting in which a node $A$ knows about all $k$
tokens while another node $B$ knows all but one token. If $A$ does not know which token $B$ is missing then, in a worst-case deterministic setting, $k$ rounds
of token forwarding are required. Randomized strategies can improve the expected number of 
rounds only to $k/2$. A better strategy is to send an XOR of all tokens: with this one piece of information $B$
can reconstruct the missing token. 

Similar situations arise frequently in the
end phase of token forwarding algorithms. Here  
most nodes already know most of the tokens but, because of the changing topology,
do not know which few tokens are not shared with their unknown neighbors of this round.
 Most token forwarding steps are therefore wasted. Network coding circumvents this problem, making it
\emph{highly probable} that \emph{every} communication will carry new information.

\subsection{The Network Coding Analysis}\label{sec:simpleNC}

In this section we review the simple projection analysis technique that was introduced previously~\cite{AnalyzingNC}. It shows that the full ``span'' of the message vectors ultimately spreads everywhere by tracking the projection of the received space in each direction separately. As argued above, a node $u$ can recover a token $t_i$ if and only if the first $k$-components of the vectors received by $u$ span the $i^{th}$ unit vector of $F_q^k$. For the analysis we will thus solely concentrate on the first $k$ coordinates of the vectors sent around. We track these projections using the following definition:

\begin{definition}\label{def:sensing}
A node $u$ \emph{senses} a coefficient vector $\vec \mu \in F_q^{k}$ if it
has received a message with a coefficient vector $\vec \mu'$ that is not
orthogonal to $\vec \mu$, i.e., $\vec  \mu' \cdot \vec  \mu \neq 0$. 
\end{definition} 

\begin{lemma}\label{lem:sensing-spreads}
Suppose a node $u$ senses a vector $\vec \mu$ and generates a new
message. Any recipient of this message will then sense
$\vec \mu$ with probability at least $1 - 1/q$. 
\end{lemma}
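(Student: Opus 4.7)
\medskip

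\noindent\textbf{Proof plan.} The plan is to unpack the definitions and reduce the claim to a single application of the standard fact that a nontrivial $F_q$-linear form evaluated on uniformly random inputs vanishes with probability exactly $1/q$.

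First I would fix notation: let $\vec \mu_1, \ldots, \vec \mu_m \in F_q^k$ denote the first-$k$-coordinate projections of the coefficient vectors of the messages $u$ has received so far. By Definition~\ref{def:sensing}, the assumption that $u$ senses $\vec \mu$ means that $\vec \mu_j \cdot \vec \mu \neq 0$ for at least one index $j$. When $u$ generates its new message, by construction its coefficient vector has the form $\vec \mu' = \sum_{i=1}^m \alpha_i \vec \mu_i$ with $\alpha_1,\ldots,\alpha_m$ chosen independently and uniformly from $F_q$.

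Next I would analyze when the recipient senses $\vec \mu$ via this new message, i.e.\ when $\vec \mu' \cdot \vec \mu \neq 0$. The key observation is that
\[
\vec \mu' \cdot \vec \mu \;=\; \sum_{i=1}^m \alpha_i \, (\vec \mu_i \cdot \vec \mu),
\]
which is a linear form in the random $\alpha_i$ with coefficients $c_i := \vec \mu_i \cdot \vec \mu \in F_q$. By the sensing hypothesis, some $c_{j} \neq 0$. Conditioning on arbitrary values of the other $\alpha_i$'s and varying $\alpha_{j}$ uniformly over $F_q$, the sum takes each element of $F_q$ exactly once, so it equals $0$ with probability exactly $1/q$. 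Hence $\Pr[\vec \mu' \cdot \vec \mu \neq 0] = 1 - 1/q$, which is exactly the claim. (If the recipient happens to have sensed $\vec \mu$ from earlier messages, it senses $\vec \mu$ deterministically, so the bound still holds; only the new message is needed to guarantee the $1-1/q$ probability.)

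There is essentially no ``hard part'' here — the entire content is the elementary lemma about nontrivial linear forms on $F_q^m$. The only subtlety worth flagging in the writeup is that the randomness used by $u$ in picking $\vec \mu'$ is independent of the receiver's prior knowledge and of $\vec \mu$ itself (which is a fixed ``test'' vector from the analysis, not something the adversary chooses adaptively based on $u$'s internal coin flips), so conditioning on the $c_i$'s is legitimate and the $1/q$ bound applies in the worst case over the adversary's choices.
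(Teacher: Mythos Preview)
Your proof is correct and follows essentially the same approach as the paper's: both expand $\vec \mu' \cdot \vec \mu$ by linearity into $\sum_i \alpha_i(\vec \mu_i \cdot \vec \mu)$, observe that at least one coefficient is nonzero by the sensing hypothesis, and then condition on the remaining $\alpha_i$'s to conclude that the sum vanishes with probability exactly $1/q$. Your additional remarks about independence of $u$'s randomness from $\vec \mu$ and from the recipient's prior state are a useful clarification but do not change the argument.
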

\begin{proof}
This lemma simply states that a random linear combination of vectors
$\vec \mu'_j$ that are not all perpendicular to $\vec \mu$ is unlikely to be
perpendicular to $\vec \mu$.  Let $r_j$ be the random coefficient for $\vec \mu'_j$.
Then $(\sum r_j \vec \mu'_j) \cdot \mu = \sum r_j (\vec \mu'_j \cdot \mu)$.  Suppose 
without loss of generality that $\vec \mu'_0 \cdot \vec \mu \ne 0$.
Conditioned on all other values $r_j$, exactly one value of $r_0$
 will make the sum vanish. This value is taken with probability $1/q$.  
\end{proof}

Lemma~\ref{lem:sensing-spreads} shows that any node sensing any $\vec
\mu$ will pass that sense to its neighbors with constant probability.
Note furthermore that sensing is monotone and that unless all nodes
can already sense $\vec \mu$ the adversary must connect the
nodes that sense $\vec \mu$ to those that do not. This shows that in
each round the number of nodes that sense a vector $\vec \mu$
increases by a constant in expectation. A simple Chernoff bound
shows further that the probability that after $O(n + k)$ steps not all
nodes sense $\vec \mu$ is at most $q^{-\Omega(n+k)}$.
We now apply a union
bound: there are $q^{k}$ distinct vectors in $F_q^{k}$, and each
fails to be sensed by all nodes with probability $q^{-\Omega(n+k)}$. This
shows that all vectors in $F_q^{k}$ are sensed with high probability
implying that all nodes are able to decode all tokens.  The following
lemma is immediate.

\begin{lemma}\label{lem:simpleNC}
The network coding algorithm with $q \geq 2$ solves the $k$-indexed-broadcast problem
in an always connected dynamic network with probability at least $1 - q^{-n}$ in time $O(n + k)$.
It uses  messages of size $k \lg q + d$ where $d$ is the size of a token.
\end{lemma}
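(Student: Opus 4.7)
The plan is to apply the projection analysis of Lemma~\ref{lem:sensing-spreads} uniformly over every nonzero direction $\vec\mu \in F_q^{k}$ and then take a union bound over the at most $q^{k}$ such directions. For a fixed $\vec\mu$, let $S_t \subseteq V$ denote the set of nodes that sense $\vec\mu$ after round $t$. Since $\vec\mu \neq 0$, some coordinate $\mu_i$ is nonzero, and the (at least one) node that initially holds token $t_i$ starts with the vector $v_i = (e_i, t_i)$ whose coefficient part satisfies $e_i \cdot \vec\mu = \mu_i \neq 0$, so $S_0$ is nonempty.

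Next I would show that $|S_t|$ grows by at least a constant in expectation per round until it saturates at $n$. Sensing is monotone, so $S_t \subseteq S_{t+1}$. Whenever $S_t \subsetneq V$, connectivity of $G(t+1)$ forces at least one edge from some $u \in S_t$ to some $v \notin S_t$; by Lemma~\ref{lem:sensing-spreads}, $v$ senses $\vec\mu$ after round $t+1$ with probability at least $1 - 1/q \geq 1/2$. Thus $|S_{t+1}| - |S_t|$ stochastically dominates a Bernoulli($1/2$) variable whenever $|S_t| < n$, and a Chernoff bound yields $\Pr[|S_T| < n] \leq q^{-C(n+k)}$ for $T = c(n+k)$, with $c$ chosen large enough to push $C$ above any desired constant.

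Taking $C \geq 2$ and union bounding over the at most $q^{k}$ nonzero $\vec\mu$ gives overall failure probability at most $q^{k} \cdot q^{-C(n+k)} \leq q^{-n}$. With the complementary probability, every node senses every nonzero vector of $F_q^{k}$; equivalently, the coefficient parts of the vectors each node $u$ has received span all of $F_q^{k}$ (otherwise their span would be a proper subspace whose orthogonal complement contains some unsensed $\vec\mu$). Hence $u$ can express every basis vector $e_i$, and thus recover every token $t_i$ by Gaussian elimination on the augmented vectors. The message size claim is immediate: each vector consists of $k + d'$ field elements, totaling $k \lg q + d' \lg q = k \lg q + d$ bits since $d' = \lceil d/\lg q \rceil$.

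The main obstacle is calibrating the Chernoff exponent so that it dominates the $q^{k}$ loss from the union bound; this is resolved by absorbing a large enough constant into the $O(n+k)$ round count. A minor subtlety worth verifying is that the increments of $|S_t|$ are genuinely independent of the past conditioned on $G(t+1)$, which holds because in the adaptive-adversary model the adversary commits to $G(t+1)$ before the nodes draw their fresh random coefficients for the round.
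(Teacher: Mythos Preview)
Your proof is correct and follows the same route as the paper: track sensing of each fixed $\vec\mu$, use connectivity plus Lemma~\ref{lem:sensing-spreads} to get a constant-probability increment, apply a Chernoff bound, and union bound over the $q^k$ directions. The extra care you take (nonemptiness of $S_0$, the decoding step via the orthogonal-complement argument, and the remark on adaptivity) matches and slightly elaborates on the paper's sketch.

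One small technical slip: you weaken the per-round success probability from $1-1/q$ to $1/2$ and then assert a Chernoff tail of $q^{-C(n+k)}$. With a Bernoulli$(1/2)$ minorant the Chernoff bound only gives $2^{-\Theta(c)(n+k)}$, which does not beat the $q^k$ union-bound loss for large $q$ unless the hidden constant in $O(n+k)$ grows with $\log q$. To get the $q$-based tail uniformly in $q$, keep the Bernoulli$(1-1/q)$ domination and use the bound $\Pr[\text{at least }(c-1)(n+k)\text{ failures in }c(n+k)\text{ trials}] \le \binom{c(n+k)}{n+k}q^{-(c-1)(n+k)} \le (ec)^{n+k}q^{-(c-1)(n+k)}$, which is at most $q^{-(n+k)}$ for any $q\ge 2$ once $c$ is a sufficiently large absolute constant. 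This is exactly the $q^{-\Omega(n+k)}$ the paper invokes.
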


\section{Derandomizing Random Linear Network Coding}\label{sec:det}

The description of network coding above might suggest that the distributed
random linear network coding approach is inherently randomized.
We give the novel result that this is 
not the case. Instead of providing a deterministic algorithm directly we 
first prove that even an \emph{omniscient adversary}, which knows
knows all randomness in advance,  
cannot prevent the fast mixing of the network coding algorithm
if the field size is chosen large enough:

\begin{theorem}\label{thm:NComniscient}
The network coding algorithm with $q = n^{\Omega(k)}$ solves the $k$-indexed-broadcast problem in
an always connected dynamic network against an omniscient adversary with probability at least $1 - q^{-n}$ in time $O(n + k)$.
It uses messages of size $k^2 \log n + d$ where $d$ is the size of a token.
\end{theorem}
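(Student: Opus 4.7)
The strategy is to follow the sensing-spread template of Lemma~\ref{lem:simpleNC}, but account for the extra power of the omniscient adversary by carefully tracking how much the per-round growth probability degrades. Fix any non-zero $\vec \mu \in F_q^{k}$ and consider a round $t$ with sensing set $S_t$. As in Lemma~\ref{lem:sensing-spreads}, each sensing node $v$ builds $m_v$ as a fresh random linear combination of its received vectors, so $m_v \cdot \vec \mu$ is uniformly distributed in $F_q$, equals zero with probability exactly $1/q$, and these events are independent across $v \in S_t$. The key structural observation for the omniscient setting is that the adversary can pick a connected $G(t)$ preventing $S_t$ from growing if and only if at least one $v^* \in S_t$ has $m_{v^*} \cdot \vec \mu = 0$: given such a ``safe'' node, internally connect $S_t$ and its complement $\bar S_t$ and route the single required bridge through $v^*$; conversely, if every sensing node is non-perpendicular then connectivity of $G(t)$ forces at least one productive edge. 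Hence the one-round growth probability is at least $(1-1/q)^{|S_t|} \ge 1 - n/q$, regardless of the history.

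Running the algorithm for $T = c(n+k)$ rounds with $c$ a sufficiently large constant, and using independence of the messages' randomness across rounds, the number of growing rounds stochastically dominates $\mathrm{Bin}(T, 1 - n/q)$. Failure for $\vec \mu$ means fewer than $n-1$ such rounds, so a standard Binomial tail bound gives
\[
\Pr[\vec \mu \text{ fails}] \;\le\; \binom{T}{T-n+1}\left(\frac{n}{q}\right)^{T-n+1}.
\]
Substituting $q = n^{\Omega(k)}$ and $T = c(n+k)$, the right-hand side is at most $q^{-(n+k)}$ once the hidden constants are chosen large enough; a union bound over all $q^k$ nonzero $\vec \mu$ then yields total failure probability at most $q^k \cdot q^{-(n+k)} = q^{-n}$. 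The message format is the coefficient vector of $k$ field elements plus the $d$-bit payload, giving size $k \log q + d = O(k^2 \log n + d)$.

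The only qualitative departure from Lemma~\ref{lem:sensing-spreads} is the bridging argument of the first paragraph; the remaining work is routine concentration and a union bound. Against the omniscient adversary we effectively take the ``best of $|S_t|$'' coin flips instead of a single one, losing roughly a factor of $n$ per round in the per-round failure probability; this loss is precisely what forces the blow-up from the constant-sized field of Lemma~\ref{lem:simpleNC} to $q = n^{\Omega(k)}$, and the corresponding jump from linear to quadratic coefficient overhead.
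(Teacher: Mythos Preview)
Your bridging observation is correct: in any round the omniscient adversary can prevent $S_t$ from growing precisely when some $v\in S_t$ emits a message orthogonal to $\vec\mu$, and must allow growth otherwise. The gap is in the probabilistic step that follows. You compute $\Pr[\forall v\in S_t:\ m_{v,t}\cdot\vec\mu\neq 0]=(1-1/q)^{|S_t|}$ by treating the round-$t$ coefficients as fresh and independent of $S_t$ and of each node's received vectors. That is valid against an adversary with one round of lookahead (who fixes $G(1),\dots,G(t-1)$ before seeing the round-$t$ coins and then picks $G(t)$ after seeing them), but it fails against the omniscient adversary of the theorem: its choices in rounds $1,\dots,t-1$ may already depend on the round-$t$ coefficients, so both $S_t$ and the received vectors at time $t$ are correlated with those very coefficients. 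Once you condition on the history, the round-$t$ coins are no longer uniform, and neither the $(1-1/q)^{|S_t|}$ bound nor the claimed stochastic domination by $\mathrm{Bin}(T,1-n/q)$ is justified. Already with $n=3$, $k=1$ the effect appears: the adversary can inspect the round-$2$ coins of \emph{both} non-source nodes and decide in round $1$ which of them to admit into $S_2$, routing the bridge through whichever one happens to be orthogonal; this gives a round-$2$ growth probability of $(1-1/q)^3$, strictly below your $(1-1/q)^{|S_2|}=(1-1/q)^2$.

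The paper avoids this dependency altogether by replacing the round-by-round conditional argument with a global union bound. It observes that the adversary's influence on the outcome is fully summarized by a compact \emph{witness}: for each node, the at most $k$ (time, sender) pairs at which its spanned subspace strictly grew. A witness is specified by $O(nk\log n)$ bits, so there are only $\exp(O(nk\log n))$ of them; once a witness is fixed, the coefficient randomness is genuinely unconditioned and the usual $q^{-\Omega(n)}$ tail applies. Choosing $q=n^{\Omega(k)}$ makes the union bound over witnesses and over all $\vec\mu$ go through. Your per-round picture is the right intuition, but to make it rigorous against an adversary that can encode future coins into past topologies one needs a device---like this witness count---that decouples the adversary's combinatorial choices from the randomness being analyzed.
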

\begin{proof}(Sketch)
The proof of this result is nontrivial.  The obvious approach, of
taking a union bound over all possible adversarial strategies
expressed as a ``connectivity schedule,'' fails because there are too
many of them.  Instead, we carefully map each such schedule to a small
set of canonical ``witnesses'' that describe only the flow of new
information from node to node; there are few enough of these witnesses
that a union bound can be applied.

  We specify a compact witness by specifying, at each time step, \emph{which
    nodes learn something new} (in other words, receive a vector not
  already in the span of their received messages) and \emph{which nodes they learn it
    from}.\footnote{There may be some ambiguity about which received
    vectors are ``new'' if they are not linearly
    independent.  To remove this ambiguity, consider the vectors to
    arrive one at a time in some arbitrary order, and include the
    prior-arrived vectors of the round while evaluating newness.}
  Given all the random choices for the coefficients, this information
  suffices to inductively reconstruct the complete learning history (but not the complete topology sequence):
  By induction, we will know which subspace is spanned by each
  node at a given time step and, from the coefficient choices, we
  will know what vector it broadcasts. Given this, if we know which
  nodes learn something new from which nodes, we will know 
  what vectors each received and can thus infer what their subspace
  will be in the next round.  

  The key benefit of this representation is that it is small.  Note
  that nodes are learning a $k$-dimensional subspace, and that each
  time a node learns something new, the dimension of its subspace
  increases.  Thus, each node can have at most $k$ ``learning
  events''.  We specify the witness by specifying, for each node,
  the $k$ times and senders triggering such an events.  This requires
  $O(k\log n)$ bits per node for a total of $O(nk\log n)$ bits to
  specify a witness, meaning the number of witnesses is $\exp(nk\log
  n)$. With a failure probability of at most $q^{-n}$ and the given
  choice of $q$, this is sufficiently small for the union bound to apply;
  details will appear in the full paper.
\end{proof}

The proof of Theorem~\ref{thm:NComniscient} can be extended to a randomized existence proof for a
matrix that contains a sequence of pseudo-random choices for every possible ID; such that, no
matter how the adversary assigns the IDs and decides on the network dynamics, if all nodes choose their 
coding coefficients according to their sequence, all vectors always spread. By giving such a matrix as a (non-uniform) advice or by computing the, e.g., lexicographically first such matrix at every node, the next corollary follows. We defer the details to the full paper. 

\begin{corollary}\label{col:NCdeterministic} \ 
There are uniform and non-uniform deterministic algorithms that solve the $k$-indexed-broadcast problem 
in an always connected dynamic network in time $O(n + k)$ using messages of size
$k^2 \log n + d$ where $d$ is the size of a token. The uniform deterministic algorithm
performs a super-polynomial time local computation before sending the first message.  
\end{corollary}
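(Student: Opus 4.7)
My plan is to upgrade Theorem~\ref{thm:NComniscient} from a randomized-algorithm statement into a pure existence statement about a single fixed global coefficient schedule, and then use that schedule either as non-uniform advice or as the output of a deterministic brute-force search that each node performs locally.

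First I would fix the ID space $\mathcal{I}$ of size $n^{O(1)}$ and the horizon $T = O(n+k)$ supplied by Theorem~\ref{thm:NComniscient}. The object I want to construct is an advice matrix $M \in F_q^{\mathcal{I} \times [T] \times [k]}$: for each possible identifier $u \in \mathcal{I}$ and each round $t \in [T]$, $M[u,t,\cdot]$ will be the $k$-tuple of field coefficients that a node with ID $u$ uses when forming its round-$t$ linear combination of the vectors it has received so far. For a uniformly random $M$, and for any particular choice of $n$ IDs drawn from $\mathcal{I}$, any initial token distribution, and any omniscient adversarial connectivity schedule, the execution driven by $M$ is exactly the run analyzed in Theorem~\ref{thm:NComniscient} with all randomness fixed up front. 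Thus the failure probability for a single such input triple is at most $q^{-n}$.

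The next step is to apply the probabilistic method over all adversarial inputs simultaneously. The number of ID assignments is at most $|\mathcal{I}|^n = n^{O(n)}$; the number of initial token distributions is at most $2^{O(nk)}$; and the canonical-witness argument inside Theorem~\ref{thm:NComniscient} already enumerates at most $\exp(nk\log n)$ topology-plus-learning histories per assignment. Choosing $q = n^{\Omega(k)}$ with a sufficiently large hidden constant makes $q^{-n}$ beat the product of these counts, so there exists a single $M^\star$ that succeeds on every input triple within $O(n+k)$ rounds. Each message still consists of $k$ coefficients in $F_q$ plus a data payload, for a total of $k\log q + d = O(k^2\log n) + d$ bits, matching the stated bound. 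For the non-uniform algorithm we simply hardwire $M^\star$ into every node; node $u$ consults the row $M^\star[u,\cdot,\cdot]$ at runtime. For the uniform algorithm, every node independently enumerates candidate matrices over $F_q^{\mathcal{I}\times[T]\times[k]}$ in a fixed lexicographic order and, for each candidate, verifies by exhaustion over all ID assignments, token distributions, and adversarial schedules whether it succeeds within $O(n+k)$ rounds, then commits to the first candidate that passes. Because this search depends only on $n$, $k$, $d$, and $q$, all nodes agree on the same $M^\star$ without communication; the computation is super-polynomial but finite and occurs entirely before the first message is sent.

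The main obstacle is the ID-assignment union bound. Because each node's coefficients must depend only on its own ID (to preserve the knowledge-based, communication-free structure), one fixed matrix $M^\star$ has to simultaneously defeat every possible labeling of the $n$ real nodes by identifiers from $\mathcal{I}$. This is what forces the slight bump in $q$ and therefore the $k^2\log n$ header overhead; the remaining components, in particular the existence argument for a single fixed adversary, are inherited directly from the canonical-witness analysis of Theorem~\ref{thm:NComniscient}.
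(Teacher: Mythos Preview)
Your proposal is correct and follows essentially the same route as the paper: the paper also passes from Theorem~\ref{thm:NComniscient} to a probabilistic-method existence proof for a single coefficient matrix indexed by all possible IDs, then either hardwires that matrix (non-uniform) or has every node compute the lexicographically first such matrix (uniform, super-polynomial precomputation). Your identification of the ID-assignment union bound as the new ingredient beyond Theorem~\ref{thm:NComniscient}, and its absorption into the choice $q = n^{\Omega(k)}$ yielding the $k^2\log n$ header, matches the paper's reasoning exactly.
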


\section{Token Dissemination with\\ Network Coding}\label{sec:generalization}

We now bridge the gap from index broadcast to token dissemination.  We begin with a simple result.
Combining the results from \cite{KLO} and Lemma~\ref{lem:simpleNC} yields the following corollary:

\begin{corollary}\label{cor:straightforward}
There is a randomized network coding algorithm that solves $k$-token dissemination in 
$O(\frac{nk \log n}{b}) = O(\frac{\log n}{d} \cdot \frac{nkd}{b})$ rounds with high probability.
\end{corollary}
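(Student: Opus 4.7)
The plan is to combine KLO's token-forwarding algorithm with the random linear network coding of Lemma~\ref{lem:simpleNC} in two stages: a preprocessing stage that equips every node with a consistent $k$-indexing of the tokens, followed by a batched coded dissemination stage that respects the message-size budget $b$.

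For stage one I would run the deterministic token-forwarding algorithm of Theorem~\ref{thm:klo} on the set of $O(\log n)$-bit UIDs of the $k$ token-holders, treating each UID as a ``tiny token'' of size $d' = O(\log n)$. By Theorem~\ref{thm:klo} this spreads all $k$ UIDs to every node in $O(\frac{nkd'}{b}) = O(\frac{nk \log n}{b})$ rounds. Each node then sorts the received UIDs and declares the $i$-th smallest to be the owner of token $i$; in particular every token-holder learns a global index for its own token, supplying the input assumption of Lemma~\ref{lem:simpleNC}.

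For stage two I fix a field of size $q = n^{\Theta(1)}$ and a batch size $k' = \Theta(b/\log n)$ chosen so that a single coded message --- consisting of $k'$ coefficients of $\lceil \log q \rceil$ bits plus a $d$-bit payload --- fits in $b$ bits. I partition the $k$ tokens into $\lceil k/k' \rceil$ batches by canonical index and invoke the $k'$-indexed-broadcast algorithm of Lemma~\ref{lem:simpleNC} in sequence on each batch. Each batch terminates in $O(n + k') = O(n)$ rounds with probability at least $1 - q^{-n}$, so summing over the $O(k/k')$ batches yields $O(\frac{nk \log n}{b})$ coding rounds; a union bound keeps the overall failure probability at most $O(k) \cdot q^{-n} = n^{-\omega(1)}$. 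Adding the two stages gives the claimed $O(\frac{nk \log n}{b})$ high-probability bound.

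The main subtlety that I would have to handle is the corner regime where $b$ has no constant-factor slack over $d$, so that the feasible batch size $\Theta((b-d)/\log n)$ cannot be made $\Omega(b/\log n)$; in that case $k'$ shrinks toward $1$ and the batching degenerates to sending essentially one token per message. This regime can be absorbed by invoking the direct $O(\frac{nkd}{b})$ bound of Theorem~\ref{thm:klo} whenever $d = O(\log n)$, and by first chunking each $d$-bit token into $O(\log n)$-bit pieces and re-running the same batched argument on the chunks otherwise; with these two fall-backs plugged in, the combined algorithm meets $O(\frac{nk \log n}{b})$ uniformly. Beyond this bookkeeping, the rest of the proof is a straightforward application of the two black-box guarantees.
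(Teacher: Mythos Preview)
Your proof is correct and uses the same two ingredients as the paper: token-forwarding to establish a global indexing, then the network-coded indexed broadcast of Lemma~\ref{lem:simpleNC} applied in batches of $\Theta(b/\log n)$ tokens. The only organizational difference is that the paper interleaves the two steps per batch---flood the $\Theta(b/\log n)$ smallest remaining token IDs in $O(n)$ rounds, broadcast those tokens via coding in $O(n)$ rounds, repeat $k\log n/b$ times---whereas you index all $k$ tokens up front via Theorem~\ref{thm:klo} and then run the coded batches sequentially; both orderings give the same $O(nk\log n/b)$ bound, and your two-stage layout is arguably cleaner.

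One small slip: you index by ``UIDs of the $k$ token-holders,'' but a single node may hold several tokens, so holder IDs need not be in bijection with tokens. The paper handles this by appending a local sequence number to the node ID to obtain a distinct $O(\log n)$-bit identifier per \emph{token}; with that trivial fix your Stage~1 goes through unchanged. Your worry about the $b\approx d$ corner case is also unnecessary here, since the model allows $O(b)$-bit messages and $d\le b$, so the constant-factor slack already accommodates $\Theta(b/\log n)$ coefficients plus a $d$-bit payload.
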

\begin{proof}
  All nodes can generate $O(\log n)$-size unique IDs for their own
  tokens by concatenating a sequence number to the node ID.  Now all
  nodes flood the network repeatedly announcing the smallest
  $\Omega(b/\log n)$ tokens they have heard about.  After $n$ rounds all
  nodes will know these token IDs and can give them consistent distinct indices by sorting them.
  The corresponding $\Omega(b/\log n)$ tokens can then be broadcast
  to all nodes in $O(n)$ time using network-coded indexed broadcast.
  This needs to be repeated $k \frac{\log n}{b}$ times, leading to the claimed time bound.
\end{proof}

Unfortunately, this is only a $\frac{\log n}{d}$ factor faster than the bound for token forwarding algorithms from Theorem~\ref{thm:klo}. Thus no improvements are achieved for $d=O(\log n)$-size tokens, even for large message sizes. This is unsurprising as the algorithm uses flooding to solve the problem of disseminating the $b/\log n$ smallest token identifiers for indexing---a $k=(b/\log n)$-token dissemination problem with the identifiers treated as tokens of size $\Omega(\log n)$.  Thus if the tokens themselves are of logarithmic size relying on flooding as an indexing subroutine cannot lead to any improvement. We also note that, if $d \ll b$, the efficiency of the network coding messages is severely handicapped: The $O(b)$-size coefficient overhead takes up nearly all the space while the coded tokens only have size $d$. Thus in principle one could broadcast tokens that are a factor of $\frac{b}{d}$ larger.

We solve both problems by \emph{gathering} many tokens to one (or a small number of) nodes.  If all tokens are at one node, they can all trivially be assigned distinct indices.  Then, they can be grouped into blocks of $b/2d$ tokens, each of total size $b/2$, and network coding can be used to disseminate $b/2$ of these blocks simultaneously.  We need an additional $b/2$ space to hold the extra $b/2$ dimensions needed to ``untangle'' the coded messages, but these too fit in the size-$b$ messages.  In the discussion below, we will ignore the factors of 2 mentioned here.


We have two gathering-based algorithms,  one that works well as long as $b \leq k^{1/3}$ and
one that works for larger message sizes. Both are based on the following simple random
token forwarding algorithm:\\[-0.15cm]

\boxtext{
\begin{tabbing}
else\= else\= else\= \kill
  Algorithm {\tt random-forward}\\
\\[-0.15cm]
  {\bf repeat} $O(n)$ times\\
  \>each node forwards $b/d$ tokens\\
  \>\ \ \ chosen randomly from those it knows\\
  \\
  Identify a node with the maximum token count\\
  \ \ \ (using $O(n)$ rounds of flooding)
\end{tabbing}
}

\begin{lemma}\label{lem:randomforward}
  If initially there are $k$ tokens in the network then, after {\tt random-forward},
  the identified node knows with high probability either all or at least $M=\sqrt{\frac{bk}{d}}$ tokens.
\end{lemma}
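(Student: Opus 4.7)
The plan is to lower-bound the total token-incidence count $K := \sum_v k_v$ and then invoke averaging: since $\max_v k_v \geq K/n$, showing $K \geq nM$ suffices. Setting $c := b/d$, I argue by contradiction: suppose the max count stays below $M$ throughout the $O(n)$ rounds, and show that random-forwarding then forces $K$ to grow past $nM$.

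The key per-round claim I would establish is a pointwise-in-history lower bound on each token's spread probability. Fix a token $t$ whose knower-set $S_t^{(i)}$ is a proper subset of the nodes in round $i$. By connectivity of the round-$i$ graph, some edge $(u,v)$ crosses the boundary with $u \in S_t^{(i)}$ and $v \notin S_t^{(i)}$. Node $u$ broadcasts $c$ tokens picked uniformly at random from its $k_u \leq M$ tokens, so $t$ is forwarded (and $v$ newly learns it) with probability at least $c/M$; crucially, this bound holds uniformly in the adversary's history.

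Concentration then closes the argument. Stochastically dominating the stream of per-round growth indicators for token $t$ from below by i.i.d.\ Bernoulli$(c/M)$ variables and applying a Chernoff bound gives $|S_t| \geq nc/M = nM/k$ after $\Theta(n)$ rounds with high probability. A union bound over the $k \leq n$ tokens yields this simultaneously for every $t$, whence $K = \sum_t |S_t| \geq k \cdot (nM/k) = nM$, contradicting the assumption. The corner case $k \leq c$ (equivalently $M \geq k$) is handled separately: every $k_v \leq c$, so nodes flood all their tokens each round, and the standard token-forwarding analysis (Theorem~\ref{thm:klo}) delivers all $k$ tokens everywhere in $O(n)$ rounds, so the identified node knows all $k$.

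The main obstacle is making the concentration step honest against an adaptive adversary: the growth indicators for a fixed token are correlated across rounds, since both the topology and the other tokens' spread histories can depend on shared randomness. What saves the argument is precisely the pointwise-in-history nature of the $c/M$ bound---for every outcome of prior randomness, as long as $t$ is still non-universal and the maximum count is $\leq M$, the next-round growth probability is at least $c/M$, so the process stochastically dominates an i.i.d.\ Bernoulli stream regardless of the adversary's choices. A secondary technicality is that $nc/M = n\sqrt{b/(dk)}$ must be at least polylogarithmic for Chernoff to yield a union-bound-friendly tail; in parameter regimes where this fails, direct inspection shows $k \leq c$ and the flooding case applies.
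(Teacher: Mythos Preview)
Your proposal is correct and follows essentially the same argument as the paper's own proof sketch: assume the maximum stays below $M$, use connectivity plus the $c/M$ per-round spread bound, apply Chernoff to get each token at $\Omega(nc/M)$ nodes after $O(n)$ rounds, sum to obtain $K \geq nM$, and average. Your write-up is simply more careful than the paper's sketch about the adaptive-adversary coupling and the degenerate $k \le c$ regime; the one minor slip is the claim that the regime where $nc/M$ fails to be polylogarithmic reduces to $k \le c$---in fact, since $b \ge d$ and $k \le n$ we have $nc/M = n\sqrt{b/(dk)} \ge \sqrt{n}$, so that regime never arises and the side argument is unnecessary (though harmless).
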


\begin{proof} (Sketch)
  While there are less than $M$ tokens at any node, a node
  choosing $b/d$ random tokens to transmit will choose any
  \emph{particular} token with probability at least $b/dM$.  Since at least one
  node that knows the token is connected to one that do not, this
  implies that a token ``spreads'' to at least one new node each round
  with probability at least $b/dM$.  Thus after $n$ rounds each token
  is at $\Omega(bn/dM)$ nodes with high probability.  This applies to
  each token so there are $kbn/dM$ copies of tokens in the network.
  It follows that some node has at least $kb/dM$ tokens. A contradiction would
  arise unless $M > kb/dM$; the result follows. We defer the details to the full paper.
\end{proof}

This lemma has a nice interpretation, if one looks how tokens spread over time.  At first, the protocol is extremely efficient, but as more and more tokens become known to the nodes, there are ever more wasted broadcasts.  Spreading
all tokens in this way requires in expectation $O(nkd/b)$ rounds, because the wasted broadcasts occurring for the last half 
of the tokens dominate (see also Section~\ref{sec:coding-vs-forwarding}). Note that this is exactly the time bound for the 
flooding-based algorithms of Theorem~\ref{thm:klo}. Our first algorithm uses the efficient start phase of
{\tt random-forward} to gather tokens and then broadcasts the gathered tokens using network coded indexed-broadcast:\\[-0.15cm]

\boxtext{
\begin{tabbing}
else\= else\= else\= \kill
Algorithm {\tt greedy-forward}\\
\\[-0.15cm]
{\bf while} tokens remain to be broadcast\\
\>{\tt random-forward}\\
\>the identified node broadcasts up to $b^2/d$ tokens\\
\>\ \ \ (using the network coded indexed-broadcast)\\
\>remove all broadcast tokens from consideration
\end{tabbing}
}

\begin{theorem}\label{thm:alg1} 
  With high probability the {\tt greedy-forward} algorithm takes $O(nkd/b^2 + nb)$ time to solve
  the $k$-token dissemination problem. 
\end{theorem}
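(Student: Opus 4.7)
The plan is to bound the iteration count of {\tt greedy-forward} and multiply by the per-iteration cost of $O(n+b)$ rounds. Each iteration consists of {\tt random-forward}, which takes $O(n)$ rounds of random forwarding plus an $O(n)$ identifying flood, followed by a network-coded broadcast of the gathered tokens. Following the blocking idea described just above the algorithm box, the identified node packs its (at most) $b^2/(4d)$ tokens into at most $b/2$ meta-tokens of size $b/2$ each, indexes them, and runs the random-linear-combination scheme over $F_2$; each message is then a $b/2$-bit coefficient vector plus a $b/2$-bit coded meta-token (total $b$ bits), so by Lemma~\ref{lem:simpleNC} every node recovers all meta-tokens in $O(n+b)$ rounds.

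Next I would bound the progress per iteration. Let $k_i$ denote the number of undisseminated tokens at the start of iteration $i$, with $k_1 = k$. Applying Lemma~\ref{lem:randomforward} to the subproblem on the remaining $k_i$ tokens, whp the identified node gathers $\min(k_i,\sqrt{bk_i/d})$ tokens, so the iteration removes $\Theta(\min(k_i,\sqrt{bk_i/d},b^2/d))$ of them. I would then split the execution into two phases. In \emph{Phase I} ($k_i \ge b^3/d$) one has $\sqrt{bk_i/d} \ge b^2/d$, so a full $\Theta(b^2/d)$ tokens are removed per step, exhausting the phase in $O(kd/b^2)$ iterations. In \emph{Phase II} ($k_i < b^3/d$) only $\Theta(\sqrt{bk_i/d})$ tokens are removed per step; the key computation is the substitution $x_i = \sqrt{k_i}$, under which $k_{i+1} = k_i - \Theta(\sqrt{bk_i/d})$ becomes (via AM--GM on $\sqrt{x_i(x_i - c\sqrt{b/d})}$) $x_{i+1} \le x_i - \Theta(\sqrt{b/d})$. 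Hence $x_i$ drops linearly from at most $b^{3/2}/\sqrt{d}$ down to $\sqrt{b/d}$, giving $O(b)$ iterations. Once $k_i \le b/d$ the ``all or at least $M_i$'' clause of Lemma~\ref{lem:randomforward} forces the identified node to hold \emph{all} remaining tokens, and one further iteration finishes the job.

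Combining, $O(kd/b^2 + b)$ iterations at cost $O(n+b)$ each give $O(nkd/b^2 + nb + kd/b + b^2)$ rounds total. Under the natural assumption $b \le n$ (larger $b$ makes the problem trivial since a single message fits all token identifiers), the cross terms $kd/b$ and $b^2$ are absorbed into $nkd/b^2$ and $nb$, respectively, yielding the claimed $O(nkd/b^2 + nb)$ bound. A union bound over the polynomially many iterations converts the per-iteration whp guarantees of Lemmas~\ref{lem:randomforward} and~\ref{lem:simpleNC} into the overall whp statement.

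The main obstacle is the Phase~II analysis: the per-iteration progress shrinks as $k_i$ decreases, and a naive bound would give too many iterations. The linearizing substitution $x_i = \sqrt{k_i}$ is what tames the recurrence and keeps the ``slow'' iterations at $O(b)$, which is precisely the budget the $nb$ additive term can absorb.
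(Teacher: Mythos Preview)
Your proposal is correct and follows essentially the same approach as the paper: the same two-phase split at the threshold $k' = b^3/d$, the same $O(kd/b^2)$ iteration count in Phase~I, and the same $O(b)$ iteration count in Phase~II. The only difference is cosmetic: for the Phase~II recurrence $T(k') \le 1 + T(k' - \sqrt{k'b/d})$ the paper argues that halving $k'$ takes $O(\sqrt{k'd/b})$ iterations and sums the resulting geometric series, whereas you linearize via $x_i = \sqrt{k_i}$ and AM--GM; both are standard ways to solve the same recurrence. Your explicit $O(n+b)$ per-iteration cost (and the absorption of the cross terms under $b \le n$) is slightly more careful than the paper's flat ``$O(n)$ per iteration,'' which tacitly assumes $b = O(n)$.
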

\begin{proof}
  Note that it is easy to check in $n$ rounds whether any node has any
  tokens to forward. Thus each iteration of the loop takes $O(n)$
  rounds.  Suppose that an iteration begins with $k'$
  tokens to be broadcast. Lemma~\ref{lem:randomforward} shows that
  at least $M=\sqrt{bk'/d}$ tokens will be gathered in one identified node
  by the {\tt random-forward} process. This node can then use the 
  network coded $k$-indexed-broadcast from Section~\ref{sec:NC} to broadcast 
  these tokens in $O(n)$ rounds. 

  Thus, so long as $M \ge b^2/d$, meaning $k' > b^3/d$, the
  algorithm will broadcast $b^2/d$ tokens every $O(n)$ rounds, which
  can happen at most $kd/b^2$ times.

  Once $k' \le b^3/d$, we no longer gather and broadcast the full
  $b^2/d$ tokens.  Instead, since the maximum number of tokens at
  a node after {\tt random-forward} is $\sqrt{\frac{bk'}{d}}$, we have the following
  recurrence for the number of $O(n$)-round phases $T(k')$ performed to
  transmit $k'$ items:\\
  \hspace*{6em} $T(k') \leq 1 + T(k' - \sqrt{\frac{k'b}{d}})$.
 
  We conclude that it requires $O(\sqrt{k'd/b})$
  phases to reduce the number of remaining items
  from $k'$ to $k'/2$.  Iterated halving yields a geometric series for
  the running time whose first term (when $k'=\Theta(b^3/d)$) dominates,
  giving $T(b^3/d) = O(b)$ phases of $O(n)$-time broadcasts which
  results in a running time of $O(nb)$ rounds in the end. Putting both
  parts together gives that the total time to collect all tokens is
  $O(nkd/b^2 + nb)$.
\end{proof}

Observe that this algorithm does not pay the extra $\log n$ factor
introduced by the naive indexed-broadcast algorithm.  Because all
tokens to be broadcast are gathered to a single node, indexing is
trivial. This {\tt greedy-forward} algorithms works well for small $b$, but for
very large $b \geq n^{1/3}$ the {\tt random-forward} routine is not 
able to gather $b^2/d$ tokens in one node efficiently. For this scenario
we have a different algorithm that avoids the additive $nb$-round term.\\[-0.15cm]

\boxtext{
\begin{tabbing}
else\= else\= else\= \kill
Algorithm {\tt priority-forward}\\
\\[-0.15cm]
Run {\tt greedy-forward} until no node gets $b^2/d$ tokens\\
{\bf while} tokens remain to be broadcast\\
\>Nodes group tokens into blocks of size $b/d$\\
\>Assign each block a random $O(\log n)$-bit priority\\
\>Index $\Theta(b)$ random blocks in $O(n)$ time\\
\>\>(using {\tt priority-forward} recursively (*))\\
\>Broadcast these blocks in $O(n)$ time\\
\>\>(using the network coded indexed broadcast)\\
\>remove all broadcast tokens from consideration
\end{tabbing}
}

\begin{lemma}
With high probability {\tt priority-forward} will terminate in $O((1+kd/b^2)\log n)$ iterations of its while loop.
\end{lemma}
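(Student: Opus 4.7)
The plan is to bound the number of while-loop iterations by treating each iteration as a ``bulk broadcast step'' of capacity $\Theta(b^2/d)$ tokens (it delivers $\Theta(b)$ indexed blocks, each of size $b/d$), and showing that the random priorities force a constant fraction of that capacity to carry previously-undisseminated tokens. This yields the main term $O(kd/b^2)$; the extra $\log n$ factor absorbs a Chernoff-style concentration.

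Since every block receives an independent uniform $O(\log n)$-bit priority, with high probability all priorities are distinct and the top-$\Theta(b)$ rule is equivalent to sampling a uniformly random size-$\Theta(b)$ subset from the pool of $N$ blocks formed network-wide. Writing $m_t$ for the number of blocks containing an as-yet-undisseminated token $t$, the double-counting identity $\sum_t m_t = (b/d)N$ combined with the per-token hitting probability $\Omega(\min\{1, bm_t/N\})$ yields an expected coverage of $\Omega(b^2/d)$ distinct tokens per iteration whenever the remaining count $k'$ exceeds $b^2/d$. Once $k' \le b^2/d$, every surviving token sits in the top-$\Theta(b)$ with constant probability per iteration, so the final sweep takes an additional $O(\log n)$ iterations by coupon-collecting.

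To convert expected drop into high-probability drop, view $k'_i$ as a super-martingale with bounded increments and downward drift $\Omega(b^2/d)$, and apply an Azuma/Chernoff bound. The main obstacle is discharging the drift inequality cleanly despite two irregularities: (i) nodes that hold fewer than $b/d$ remaining tokens form short, padded blocks that contribute fewer token-slots, and (ii) tokens with pathologically small $m_t$ could in principle slow down coverage. Irregularity (ii) is defused by reusing the spreading analysis underlying Lemma~\ref{lem:randomforward}: a by-product of the {\tt random-forward} runs executed inside {\tt greedy-forward} is that every surviving token has already propagated to $\Omega(bn/(dM))$ nodes, so no $m_t$ is pathologically small at the start of the while loop. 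A secondary concern is that the indexing of the $\Theta(b)$ selected blocks invokes {\tt priority-forward} recursively on a strictly smaller instance ($\Theta(b)$ meta-tokens of size $b$); one must verify that this recursion bottoms out in $O(n)$ rounds per outer iteration, so that the iteration count bounded in this lemma indeed captures the relevant ``work per phase'' of the outer algorithm.
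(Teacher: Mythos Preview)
Your central drift claim --- that each iteration covers $\Omega(b^2/d)$ distinct previously-unbroadcast tokens in expectation whenever $k' > b^2/d$ --- does not follow from the double-counting identity, and in fact fails on reachable configurations. The identity $\sum_t m_t = (b/d)N$ only gives $\sum_t b m_t/N = b^2/d$; once you cap each summand at $1$ to get the hitting probability, the sum can drop far below $b^2/d$ because $\min\{1,x\}$ is concave and the minimum over the constraint set sits at an extreme point with a few very heavy tokens soaking up most of the block mass. Concretely, after {\tt greedy-forward} every node holds fewer than $b^2/d$ remaining tokens and every token is at $\Omega(n/b)$ nodes; take $j \approx b^{3/2}/d$ tokens with $c_t = n$ and the remaining $k' - j$ tokens with $c_t = \Theta(n/b)$. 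Then $N = \Theta(jnd/b)$, the heavy tokens are all hit, but each light token is hit only with probability $\Theta(1/(jd))$, so the expected coverage is $\Theta(j + k'/(jd)) = \Theta(b^{3/2}/d)$, not $\Theta(b^2/d)$. Your supermartingale argument therefore does not establish the stated iteration bound. The lower bound on $m_t$ you invoke from Lemma~\ref{lem:randomforward} rules out pathologically \emph{small} $m_t$, but the obstruction here is a few pathologically \emph{large} $m_t$ inflating $N$.

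The paper sidesteps this by tracking a different potential: $C = \sum_t c_t$, the total number of (token, node) incidences, rather than $k'$. A token $t$ survives the round with probability at most $\exp(-(b^2/d)\, c_t/C)$, so the surviving potential satisfies $E[C'] \le \sum_t c_t \exp(-(b^2/d)\, c_t/C)$; writing $\alpha_t = c_t/C$ and optimizing shows this is at most $C\,e^{-b^2/(kd)}$, a clean multiplicative contraction that is immune to imbalance in the $c_t$'s (heavy tokens contribute more to $C$ but are also eliminated faster). The $\log n$ factor in the lemma then comes from driving a polynomially-bounded $C$ below $1$, together with an $O(\log n)$-round endgame when the full-block count $Cd/b$ drops below $n$ (this case split is also what discharges your irregularity (i) about partial blocks). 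To repair your argument you would need either to switch to the potential $C$, or to add a separate phase-analysis showing that the heavy-token configurations above can persist for only $O(1)$ iterations before the light-token regime (where your additive estimate is valid) takes over.
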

\begin{proof}
The while loop starts when
  no node learns of more than $b^2/d$ tokens during {\tt     random-forward}. In this case we know from
  the proof of Lemma~\ref{lem:randomforward} that afterwards the   number of
  nodes $c_i$ that know about each token $i$ is
  $\Omega(\frac{n}{b})$ with high probability. Let $C=\sum c_i$.

  The algorithm divides the known tokens into blocks of size $b/d$ and   picks $b$ random blocks.  There are at most
  $C/(b/d)$ full blocks in total and at most one partially-full block   per node
  for a total of $n$ partially full blocks.  We consider two cases.

  If $C/(b/d)<n$ then there are at most $2n$ blocks in total.  Since   with high probability every token is in $\Omega(n/b)$
  blocks, one of these blocks is among the chosen $b$ with probability
  at least $(1-1/2b)^b=\Omega(1)$.  It follows that after $O(\log n)$   rounds involving less than $n$ full blocks, all tokens will be   chosen and
  disseminated with high probability.

  If $C/(b/d) > n$ then the number of blocks is at most $2C/(b/d)$.
  We argue in this case that $C$ decreases in expectation by a factor
  of $e^{-b^2/kd}$ in each iteration.  If this is true then after
  $kd(\log n)/b^2$ rounds the expected decrease is polynomial; since
  $C$ was polynomial to begin with its expected value will be
  polynomially small.  At this point the Markov bound indicates
  that $C=0$ with high probability.

  To show the expected decrease, note there are at most
  $2C/(b/d)$ blocks of which $c_i$ contain item $i$.  Thus, when a
  random block is chosen, item $i$ is in it with
  probability at least $c_i(b/d)/2C$.  So item $i$ fails to be   chosen with probability at most   $(1-bc_i/2Cd)^b<\exp(-(b^2/d)c_i/2C)$.
  If we let $c'_i=c_i$ for tokens not chosen, and $c'_i=0$ for tokens
  that are, we find $E[\sum c'_i] \le \sum
  c_i\exp(-(b^2/d)c_i/C) = C\sum \alpha_i \exp(-(b^2/d)\alpha_i)$   where
  $\alpha_i = c_i/C$ so $\sum \alpha_i = 1$.  Differentiating
  shows this sum is maximized when all $\alpha_i$ are set
  equal at $1/k$ (since there are at most $k$ distinct $\alpha_i$),
  yielding a value of $C\exp(-(b^2/kd))$.  It follows that the
  expected value of $\sum c_i$ decreases by a factor $e^{-b^2/kd}$ in
  each round.
\end{proof}

We have shown that a small number of iterations suffices but must asses the time to implement one iteration.  In particular, we must explain how line (*) in {\tt priority-forward} can be implemented.  To choose $b$ random blocks, we give each block a random $O(\log n)$ bit priority (so collisions are unlikely) and then identify and index the $b$ lowest priorities. Since block priorities have size $O(\log n)$, we can treat their identification as an indexing problem with $d=O(\log n)$. The naive indexing algorithm via flooding requires $O(n\log n)$ time to broadcast the $b$ lowest priority blocks ($b/\log n$ blocks every $O(n)$ rounds). This would lead to a runtime of $O(nkd(\log^2 n)/b^2 + n \log^2 n)$. We can reduce the running time by a $\log n$ factor with a more careful approach, which calls {\tt priority-forward} recursively to disseminate $\Theta(b)$ of the smallest size-$O(\log n)$ priorities in only $O(n)$ time on every iteration of the while loop. We defer the details to the full paper. We get the following for the performance of the {\tt priority-forward} algorithm:

\begin{theorem}\label{thm:random-spreading-extended}
  For $b \geq \log^3 n$, {\tt priority-forward} 
  solves $k$-token dissemination in $O(\frac{\log n}{b} \cdot \frac{nkd}{b} + n \log n)$ rounds with high probability.
\end{theorem}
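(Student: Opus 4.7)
The plan is to decompose the running time into (i) the cost of the initial {\tt greedy-forward} preamble, (ii) the number of iterations of the outer while loop, and (iii) the cost of a single iteration. For (i), {\tt greedy-forward} stays in its loop only while some node accumulates $b^2/d$ tokens per invocation of {\tt random-forward}, so it broadcasts $\Omega(b^2/d)$ fresh tokens per $O(n)$-round phase and must terminate in $O(nkd/b^2)$ rounds, which already sits inside the target bound. For (ii), I would quote the preceding lemma, giving $O((1+kd/b^2)\log n)$ iterations with high probability.

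The heart of the argument is (iii): show that a single iteration runs in $O(n)$ rounds. The block grouping and priority assignment are local, and the final coded indexed broadcast of the $\Theta(b)$ chosen blocks takes $O(n)$ rounds by Lemma~\ref{lem:simpleNC}. The only nontrivial task is the indexing step marked (*), which must identify and disseminate the $b$ globally-smallest $O(\log n)$-bit priorities. A naive flood-based implementation would be a $k'=O(b)$, $d'=O(\log n)$-token dissemination that requires $\Theta(n\log n)$ rounds, exactly the $\log n$ factor we need to save.

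To shave this $\log n$, I would recurse: apply {\tt priority-forward} itself to the indexing subproblem. For the recursive instance $k'=b$, $d'=\log n$, and the hypothesis $b\geq\log^3 n$ ensures the subproblem's coefficient overhead and packing constraints remain benign. A naive induction still gives recursive runtime $O(\frac{\log n}{b}\cdot\frac{nb\log n}{b}+n\log n)=O(n\log n)$, a $\log n$ too slow. The plan to close the gap is to pipeline: run a single recursive {\tt priority-forward} as a persistent background subroutine occupying, say, half of the size-$b$ message (the outer coded broadcast taking the other half), arranged to emit one indexed batch of $\Theta(b)$ priorities per $\Theta(n)$ rounds. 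Since priorities across iterations are independent random strings, successive batches can be prepared in parallel, so the total indexing cost amortizes to $O(n\log n)$ across the entire while loop.

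The main obstacle will be making the pipelining rigorous: I must verify that the recursion bottoms out quickly (within $O(\log\log n)$ levels, at which point $k'$ drops below $b/\log n$ and a single flood finishes the indexing), that the constant-fraction share of message bandwidth does not change the Lemma~\ref{lem:simpleNC} analysis up to constants, and that a continuously-running instance produces a fresh indexed batch at the required rate rather than stalling or starving the outer broadcast. Assembling (i)--(iii) then gives $O(nkd/b^2)+O((1+kd/b^2)\log n)\cdot O(n)=O(\frac{\log n}{b}\cdot\frac{nkd}{b}+n\log n)$, as claimed.
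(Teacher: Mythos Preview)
Your decomposition into (i) the {\tt greedy-forward} preamble, (ii) the iteration count from the preceding lemma, and (iii) the per-iteration cost is exactly the paper's structure; the paper likewise isolates the indexing step (*) as the bottleneck, observes that flooding would cost $O(n\log n)$ per iteration, and states that a recursive call to {\tt priority-forward} can cut this to $O(n)$, while deferring the details to the full version. So at the level the paper actually argues, you match it.

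Where you go beyond the paper---your pipelining sketch---there are two soft spots worth flagging. First, the claim that ``priorities across iterations are independent random strings, so successive batches can be prepared in parallel'' hides a dependency: the \emph{blocks} themselves are re-formed at the start of each outer iteration from the tokens that remain after the previous broadcast, so the set of objects receiving priorities in iteration $i{+}1$ is not determined until iteration $i$ completes. You would need to argue either that blocks can be frozen (assigned once, with removed tokens simply zeroed) without hurting the covering lemma, or that the background indexer can be restarted each iteration yet still amortize to $O(n)$. Second, your assertion that the recursion ``bottoms out within $O(\log\log n)$ levels because $k'$ drops below $b/\log n$'' is not supported: at every recursive level you are again indexing $\Theta(b)$ blocks, so the subproblem parameter $k'$ does not shrink in the obvious way. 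Any bottoming-out must come from the \emph{number of iterations} at deeper levels collapsing (since $k'd'/b^2=\log n/b\ll 1$ there), not from $k'$ itself decreasing; this needs to be argued explicitly. Neither gap is fatal to the overall plan, but both need real work to close.
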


\section{Exploiting $T$-stability}

In this section we consider more stable networks and show how to
design faster protocols in such a setting.

Kuhn et al. introduced the notion of \emph{$T$-interval connectivity}
to define more stable networks in which over every block of $T$ rounds
at least a spanning-subgraph is unchanging.  They give algorithms with
linear speedup in $T$ and matching lower bound for knowledge-based token-forwarding algorithms.  We work with our related but stronger requirement of
\emph{$T$-stability} which demands that the entire network changes only every $T$
steps.  Although the Kuhn et al. lower bound for token forwarding still holds in this model,
we give network-coding algorithms with a quadratic speedup in $T$. This $T^2$ speedup
comes from two ideas, each contributing a factor of $T$.  The first is that in a $T$-stable network a node can communicate to the same neighbor $T$ times, thus passing a message $T$ times as large.  This does cost a factor-$T$ slowdown in the time to send a message, but the results of section~\ref{sec:generalization} show that the communication rate increases as $T^2$.   Combining these factors nets a factor-$T$ overall improvement.  The second idea, drawn from Kuhn et al., is that in $T$ rounds pipelining enables a node to communicate its (enlarged) message to at least $T$ nearby nodes simultaneously, yielding a second factor-$T$ speedup.
We currently need to rely on the notion of
$T$-stability for this, but we speculate that $T$-interval
connectivity might suffice. The technique composes with the our technique exploiting larger message sizes 
from the previous section and leads to quadratic speed ups in $b$ and $T$ for most settings of these parameters.

As previously, we begin by describing an efficient indexed-broadcast algorithm
and then show how it can be used as a primitive for $k$-token dissemination.

Our indexed broadcast algorithm exploits $T$-stability to broadcast $bT$
blocks each containing $bT$ bits, for a total of $(bT)^2$ bits (or
$(bT)^2/d$ tokens), in $O((n+bT^2)\log n)$ rounds.  
As before, we use network coding, treating these blocks as vectors and
flooding random linear combinations of the vectors through the
network.  We do so by dividing the network, in each block of $T$
stable rounds, into \emph{patches} of size and diameter roughly $T$.
We then spread random linear combinations of the size-$bT$ blocks from
patch to patch, taking $O(T)$ rounds to spread to each new patch but
reaching $T$ nodes in the patch each time, so that $n$ rounds suffice
for all nodes to receive all necessary linear combinations.

\subsection{Patching the Graph}

Our first step is to partition the graph into connected \emph{patches}
of size $\Omega(D)$ and diameter $O(D)$. It helps to think of $D$ as approximately
$T$; Because computing the patching takes $D \log n$ time, we will set $D=O(T/\log n)$.
We will use these patches
for $O(T)$ rounds, during which they will remain static.  First, we argue
that such patches exist.  Let $G^D$ be the $D^{th}$ power of the (unchanging)
connectivity graph---in other words, connect every node to any node
within distance $D$.  Consider a \emph{maximal independent set} $S$ in
$G^D$.  If every vertex in $G$ is assigned to the closest vertex in
$S$, we get patches that satisfy our criteria:
\begin{enumerate}
\item Consider a shortest path tree on the vertices assigned to vertex $u \in
  S$.  If $v$ is assigned to $u$, then so are the ancestors of $v$ in the
  shortest paths tree.  Thus, the shortest path tree connects the patch.
\item Because of the maximality of $S$, every vertex is adjacent in $G^D$ to a vertex in
  $S$, since otherwise such a vertex could be added to $S$. In other words, any
  vertex is within distance $D$ of $S$.  It follows that the depth of
  each shortest paths tree, which bounds the (half of the) diameter, is at most $D$
\item Also by definition, no two vertices in $S$ are adjacent in
  $G^D$---in other words, their distance in $G$ exceeds $D$.  Thus,
  any vertex within distance $D/2$ of $u\in S$ is assigned to $u$.  It
  follows that every patch has at least $D/2$ vertices.
\end{enumerate}

It remains to \emph{construct} such a maximal independent
set.  Luby's maximal independent set
permutation algorithm \cite{luby1985simple} can be easily adapted to run in
our model.  In Luby's permutation algorithm, vertices talk to their ``neighbors''.
Since we are computing in the powered graph $G'$, we need vertices to
talk to other vertices at distance $D$ over long communication paths.
We have $T$ time, but different communication paths may overlap,
causing congestion.

Fortunately, this is not a significant problem.  The core step of
Luby's algorithm assigns every vertex a random priority, then adds to
the MIS any vertex whose priority is higher than all its neighbors and
``deactivates'' all its neighbors.  Thus, nodes need only learn the
maximum priority of any neighbor and notify neighbors of their
deactivation. We can simulate the procedure.  Nodes can find the
highest priority within distance $D$ by flooding the highest priority they
hear for $D$ rounds.  If a node hears no higher priority than its own,
then it knows it is in the MIS and can broadcast a ``deactivation''
message to all nodes within distance $D$ of itself. Luby's algorithm
runs in $O(\log n)$ time, which translates to $O(D \log n)$ here. We thus 
choose $D = O(T/\log n)$.

\subsection{$T$-Stable Indexed-Broadcast}

Given our patches of the required size and diameter, we use network coding to
distribute vectors of $bT$ bits.  In a
particular sequence of $O(T)$ rounds, after having computed the
patches for this sequence, we do the following:
\begin{enumerate}
\item \emph{share:} All nodes in a patch jointly share a random linear combination of 
  the vectors in the \emph{union of all} their received messages, each adding the result to its own set of received messages
\item \emph{pass:} Each node broadcasts its patch's agreed random sum vector to its neighbors
\item \emph{share:} The first sharing phase is repeated, including the
  new vectors just received from neighbors.
\end{enumerate}

\subsubsection{Implementation}

We show how to implement all the required steps in $O(T)$ rounds.  The
middle pass step is trivial: each node breaks its size-$bT$ vector
into $T$ components of size $b$ and transmits one component in each
round.  Neighbors receive and reassemble all components.

Less trivial is the share step. We show how all the nodes in a given
patch can compute a random sum of all the size-$bT$ vectors in
all their received messages.

For this we use the vertices in the maximal independent set $S$ as leaders
and assume that each patch has agreed on a (shortest
path) tree rooted at the leader; each node knows its depth and its parent and children.
This can be done by letting the leader send out an incrementing broadcast for $O(D)$ rounds.
The time when this broadcast reaches a node tells it its depth and the (lowest ID)
node that the broadcast was received from is the ``parent''. 

Now we want to compute a random linear combination of the \emph{union} of all the vectors in all the nodes
of the patch.  First, each node just computes a random sum of its own
vectors.  It remains to sum these sums.  This would be easy
if the vectors had dimension $b$---we would pass them up from children
to parents, summing as we went, so that each node only passed up one
vector.  Since their dimension is $bT$ we pipeline.  Each node
breaks its length-$bT$ vector $(v_1,\ldots,v_{bT})$ into $T$
length-$b$ vectors $w_i=(v_{iT},v_{iT+1},\ldots,v_{(i+1)T-1})$.  At
step $s$ of this phase, any node at depth $j$ will have the cumulative
sum of all the $w_{s+j-T}$ components of the vectors from its
descendants.  It broadcasts this sum to its parent, and at the same
time receives from its children their own cumulative $w_{s+(j+1)-T}$
sums.  The receiving node adds these children's' sums to its own
$w_{s+j+1+T}$ component, producing the cumulative $w_{(s+1)+j-T}$
component sum that it needs to transmit the next round.  After $T + D < 2T$
time steps, the root will have received cumulative sums of all the
$w_i$ vectors from its children and added them, yielding the sum of
all the vectors, which is a random sum of all the basis vectors.

This random sum, a single size $bT$-vector, is now distributed by the
leader to all nodes in the patch via the obvious pipelined broadcast.

\subsubsection{Analysis}

We now analyze the share-pass-share algorithm outlined above.  As
before, we show that any vector $\mu$ that is \emph{sensed}
by (not perpendicular to the basis of) some node at the start is
quickly sensed by all vectors.  

\begin{lemma}\label{lem:stabilityNC}
With high probability the patch-sharing network coding algorithm solves the $bT$-indexed-broadcast problem in a $T$-stable dynamic network with tokens of size $bT$ in $O((n+bT^2) \log n)$ rounds using messages of size $O(b)$.
\end{lemma}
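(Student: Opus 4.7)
The plan is to lift the projection-sensing analysis of Section~\ref{sec:simpleNC} from individual nodes and single rounds to patches and entire share-pass-share phases. Fix any vector $\mu \in F_q^{bT}$ and call a patch \emph{sensing} if at least one of its nodes holds a received vector not perpendicular to $\mu$. Sensing is monotone since the algorithm only ever appends vectors to nodes' received sets. I would show that in each phase, either every patch already senses $\mu$, or the set of sensing patches grows by at least one with constant probability.

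The key claim is: if patch $P$ is sensing at the start of a phase and the (static for this phase) graph contains an edge between some node of $P$ and some node of a currently non-sensing patch $P'$, then $P'$ becomes sensing by the end of the phase with probability at least $(1-1/q)^3$. Applying Lemma~\ref{lem:sensing-spreads} three times proves this: (i) the first share step in $P$ produces a random sum of the vectors at all nodes of $P$; this sum is non-perpendicular to $\mu$ with probability $\geq 1-1/q$ and is then deposited at every node of $P$; (ii) in the pass step the node of $P$ incident to $P'$ broadcasts a random combination of its vectors, which is non-perpendicular to $\mu$ with probability $\geq 1-1/q$; (iii) the second share step in $P'$ then aggregates the newly received vector into a random sum distributed to every node of $P'$, again spreading the sense with probability $\geq 1-1/q$. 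Because the dynamic graph is connected in every round and the patches partition the vertices, whenever fewer than all $O(n/D)$ patches sense $\mu$ the adversary must expose some such crossing edge.

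A standard Chernoff bound on the $0/1$ indicators for ``phase made progress'' now shows that after $L = O(n/D + bT + \log n)$ phases every patch senses $\mu$, with failure probability at most $q^{-\Omega(bT+\log n)}$. A union bound over the $q^{bT}$ candidate vectors $\mu \in F_q^{bT}$ yields a total failure probability of $n^{-\Omega(1)}$; once every such $\mu$ is sensed at every node, each node decodes all $bT$ tokens by Gaussian elimination exactly as in Lemma~\ref{lem:simpleNC}.

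Each phase costs $O(T)$ rounds: $O(D \log n) = O(T)$ for the MIS and patch-tree (re)construction with $D = \Theta(T/\log n)$, $O(D+T) = O(T)$ for the pipelined share step up and down each tree, and $O(T)$ for the pass step in which every length-$bT$ vector is broken into $T$ chunks of $O(b)$ bits (including the $O(b)$-bit coefficient header per chunk). The total round count is therefore $O(TL) = O(n\log n + bT^2 + T\log n) = O((n + bT^2)\log n)$, with every broadcast of size $O(b)$. The main obstacle I expect is ensuring that the Chernoff concentration is tight enough for its exponent $\Omega(bT+\log n)$ to absorb the union bound over $q^{bT}$ vectors without inflating the phase count beyond $O(n/D + bT + \log n)$; this works because the per-step sensing probability is a constant bounded away from $1$, so each additional phase buys an additive $\Omega(1)$ in the exponent, exactly as in the proof of Lemma~\ref{lem:simpleNC}.
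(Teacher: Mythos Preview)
Your approach is close to the paper's but has a definitional issue and a genuine gap.

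The definitional issue: you track ``sensing patches,'' but the patching is recomputed from scratch in every meta-round because the topology may change every $T$ rounds. The quantity ``number of sensing patches'' is therefore not monotone across phases---a re-patching can merge several sensing nodes into one patch---so ``the set of sensing patches grows by at least one'' is not a well-defined progress measure across phases. The paper sidesteps this by tracking sensing \emph{nodes}: each success of the analogue of your key claim adds $\Omega(D)$ new sensing nodes (since the second share spreads the sense to \emph{all} of $P'$), and node-sensing \emph{is} monotone.

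The gap: your argument only applies while some current patch contains \emph{no} sensing node. You never handle the regime where every patch already contains a sensing node but some individual nodes still do not sense $\mu$; in that regime your progress indicator is undefined, and the jump from ``every patch senses'' to ``every node senses'' is unjustified. The paper covers this with a second case: once every patch contains a sensing node, the share step makes every node of that patch sense with probability $1-1/q$, so the expected number of non-sensing nodes shrinks by a constant factor each phase; a Markov argument converts this to constant-probability progress, and $O(\log n)$ such successes finish the job.

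Two smaller remarks. In your step (ii) the pass broadcasts the \emph{same} vector computed in step (i), not a fresh random combination, so the correct probability is $(1-1/q)^2$; your $(1-1/q)^3$ is only conservative. On the positive side, your device of padding the phase count $L$ by an additive $bT$ to absorb the union bound over $q^{bT}$ vectors is a clean alternative to the paper's preliminary reduction to the case $bT^2 \le n$, and in fact yields the slightly tighter round count $O(n\log n + bT^2)$ once the two-case analysis is in place.
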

This is close to the best achievable time.  The $n$
term follows from the network's possible $n$ diameter.  The $bT^2$
term follows from information theory: the $b^2T^2$ bits we aim to
transmit may be at a single node that broadcasts only $b$ bits per
round, implying $bT^2$ rounds will be necessary for that node to
broadcast its information.  
\begin{proof}
To simplify our proof we assume that $bT^2\le n$ and prove an $O(n\log n)$ bound.  For if $bT^2 \ge n$, we
can run our algorithm for $t<T$ such that $bt^2=n$ and distribute
$b^2t^2$ bits in $O(n\log n)$ rounds; repeating $(T/t)^2$
times will distribute all the bits in $(T/t)^2n\log n=(T^2n/t^2)\log n=T^2b\log n$ rounds.

Since we are operating on size-$bT$ messages we can allow $bT$ tokens of size $bT$, each with a $\log q = O(1)$ size coefficient.   We consider share-pass-share ``meta rounds'' of length $T$ where our patches are fixed, and show that $O(n/D)$ of these meta rounds suffice to disseminate all the tokens, for a total of $O(T*(n/D))=O(n\log n)$ rounds.
For a given meta round we consider two cases.  The first is where there is some patch that
contains no node sensing $\mu$.  In this case, the connectivity
assumption implies that a node $u$ in some such patch is adjacent to
some node $v$ in a patch containing a node that does sense $\mu$.  In
the first share step $v$ receives a random linear combination of
the vectors in its patch; since some node in the patch senses $\mu$,
with probability $1-1/q$ node $v$ will sense $\mu$ after the first 
sharing phase. In this case $v$ transmits the same random linear
combination to $u$ in the pass phase and $u$ will sense $\mu$ as well.
If so, the final share step will deliver to all nodes in $u$'s patch a linear
combination not perpendicular to $\mu$ with probability $1-1/q$.
Combining these arguments, we find that with probability $(1-1/q)^2$,
the $\Omega(D)$ nodes in $u$'s patch, which previously did not sense
$\mu$, will now do so.

The second case is where every patch contains a node that senses $\mu$.
In this case every node has a $1-1/q$ chance of sensing $\mu$ after
the first share step.  The expected number of nodes that do \emph{not}
sense $\mu$ thus shrinks by a $1/q<1/2$ factor.  The Markov bound
shows that it thus shrinks by a factor $2/3$ with constant
probability. 

We now combine the two cases.  If case 1 holds declare a success if
$\Omega(D)$ new nodes sense $\mu$; if case 2 holds declare a success
if the number of nodes that do not sense $\mu$ shrinks by $2/3$.
There can be only $O(n/D)$ successes of case 1 and $O(\log n)$ successes of
case 2 before all nodes sense $\mu$.  A Chernoff bound shows that
within $\Omega(n/D)$ occurrences of case 1 the probability that we fail to
observe $O(n/D)$ successes is $e^{-\Omega(n/D)}$.  Similarly, the probability
of less than $\log n$ successes in $\Omega(n/D)$ occurrences of case 2 is
$e^{-\Omega(n/D)}$ (this follows from the fact that $T^2<n$, meaning $n/T > \log n$).   

Finally, we apply a union bound on the above argument over all the
$2^{bT}$ distinct vectors of size $T$.  The
probability \emph{any} such vector fails to be sensed in $\Omega(n/D)$
phases is then at most $2^{bT}e^{-\Omega(n/D)}$ which is negligible given
our assumption that $bT^2 \le n$. Thus in $\Theta(n/D)$ phases each with a running time of $O(T)$, totaling $O(n\log n)$ time, all nodes 
sense all vectors and can decode all tokens.
\end{proof}

This algorithm can be derandomized using the arguments developed in Section~\ref{sec:det}
and replacing Luby's randomized MIS algorithm by the 
deterministic distributed MIS algorithm in \cite{panconesi1992improved} with a running time of $MIS(n) = 2^{O(\sqrt{\log n})}$.
The larger $k^2 \log n$ coefficient overhead still allows for $\sqrt{bT/\log n}$ tokens of size $O(bT)$
being code together for a vector size of $O(bT)$. This leads to the following Lemma:

\begin{lemma}\label{lem:stabilityNCdet} \
The deterministic patch-sharing algorithm solves the $\sqrt{bT/\log n}$-indexed-broadcast problem with tokens of size $bT$ in a $T$-stable dynamic network in $O((n + \sqrt{bT}T) \cdot MIS(n))$ rounds using messages of size $O(b)$.
\end{lemma}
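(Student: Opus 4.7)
The plan is to follow the patch-sharing framework of Lemma~\ref{lem:stabilityNC} essentially verbatim, replacing its two randomized ingredients---Luby's MIS and the random-coefficient sensing analysis---by deterministic substitutes. From Theorem~\ref{thm:NComniscient} the derandomization forces the field size up to $q = n^{\Omega(k)}$, so each network-coding vector carries a header of size $\Theta(k^2 \log n)$. Since the vector must still be pipelined across the $T$ available rounds using size-$O(b)$ messages, we need $k^2 \log n \le bT$, which forces the target $k = \Theta(\sqrt{bT/\log n})$; the data portion remains of size $bT$, so tokens of size $bT$ still fit. On the MIS side, Luby's $O(\log n)$ rounds are replaced by the $MIS(n) = 2^{O(\sqrt{\log n})}$ rounds of Panconesi--Srinivasan~\cite{panconesi1992improved}. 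One virtual MIS round in $G^D$ is simulated by $O(D)$ rounds of the dynamic network, so I would rescale $D = \Theta(T/MIS(n))$ (instead of $T/\log n$) to keep the patch construction within one meta-round of length $O(T)$. The patch properties---connected, diameter $O(D)$, size $\Omega(D)$---still follow from the maximality of the independent set in $G^D$.

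For the dissemination analysis I would reuse the case split of Lemma~\ref{lem:stabilityNC}: in each meta-round either an entire new patch of $\Omega(D)$ nodes begins to sense a given coefficient vector $\vec\mu$, or a constant fraction of the currently non-sensing nodes do. The randomized sensing propagation step is replaced by the witness-counting argument of Theorem~\ref{thm:NComniscient}: each node's spanned subspace can strictly grow at most $k$ times, so one records, for every node and every one of its at most $k$ learning events, the meta-round and the source of the event (a neighboring patch for a pass step, a set of children in the shortest-path tree for a share step). This yields a witness of size $O(nk \log n)$ bits and hence at most $\exp(O(nk \log n))$ distinct witnesses, so with $q = n^{\Omega(k)}$ a union bound shows that the pseudo-random coefficient sequence guaranteed by Corollary~\ref{col:NCdeterministic} causes every nonzero $\vec\mu$ to propagate within $O(n/D) = O(n \cdot MIS(n)/T)$ meta-rounds. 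Multiplying by the $O(T)$ cost of a meta-round gives the main $O(n \cdot MIS(n))$ term. When $\sqrt{bT}\, T > n$, I would reuse Lemma~\ref{lem:stabilityNC}'s subdivision trick of running the primitive on shorter stable sub-windows and repeating; this produces the additive $O(\sqrt{bT}\, T \cdot MIS(n))$ term, where the $\sqrt{\log n}$ gap between $k$ and $\sqrt{bT}$ is absorbed into the $2^{O(\sqrt{\log n})}$ MIS factor.

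The main obstacle will be adapting the witness construction of Theorem~\ref{thm:NComniscient} to the intra-patch share step, in which many nodes simultaneously contribute to a single random combination that is then rebroadcast patch-wide. Unlike the pure neighbor-to-neighbor setting assumed in Section~\ref{sec:det}, a single learning event at a node may draw on several senders at once via the tree aggregation, so the witness must additionally record the aggregation tree---or at least enough of it that the inductive reconstruction of each node's spanned subspace remains well-defined. The per-node budget of $k$ learning events is preserved because each still strictly increases the node's subspace, so the overall witness length is unchanged up to constants; the delicate part is verifying that the subspace reconstruction commutes with the order in which intra-patch partial sums are accumulated, so that the union bound genuinely rules out every adversarial schedule of topologies and token placements.
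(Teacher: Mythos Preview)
Your proposal is correct and follows essentially the same approach as the paper: replace Luby's MIS by the Panconesi--Srinivasan deterministic MIS (yielding the $MIS(n)$ factor and the rescaling $D=\Theta(T/MIS(n))$), and replace the random coefficients by the deterministic coding of Corollary~\ref{col:NCdeterministic}, whose $k^2\log n$ header forces $k=\Theta(\sqrt{bT/\log n})$. The paper's own justification is only the one-paragraph sketch preceding the lemma; your additional discussion of how the witness construction of Theorem~\ref{thm:NComniscient} must accommodate the intra-patch aggregation is a detail the paper simply defers to the full version rather than handles differently.
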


\subsection{$T$-Stable Token Dissemination}\label{sec:stable-token-dissemination}

We have given an $O(n \log n)$-time algorithm for indexed broadcast of
$bT$ vectors of $bT$ bits.  Applying the same reduction(s) as before,
we might hope to achieve a $k$-token dissemination algorithm with running
time $O(n \log n \frac{kd}{(bT)^2})$.  This can be achieved for most values of
$k,b$ and $T$. The key, as before,
is \emph{gathering} tokens we wish to broadcast as large blocks/tokens.  Since the blocks used
with $T$-stability are larger, gathering is harder. In particular:
\begin{itemize}
\item Using {\tt greedy-forward} to gather tokens yields an algorithm
  with running time $O(\frac{\log n}{bT^2} \cdot \frac{nkd}{b} \  + \ nbT^2\log n)$
\item Using {\tt priority-forward} to gather tokens yields an
  algorithm with running time $O(\frac{\log^2 n}{bT^2} \cdot \frac{nkd}{b} \  + \ nT\log^2 n)$.
\end{itemize}
The second algorithm is near-optimal unless $T$ is very large.  In
this case there is an alternative gathering algorithm we can apply:
create the patches of our patch algorithm, then use pipelining to
gather together the tokens in a patch to blocks of size at most $bT$
at a single node (or, if there is more than one block, at multiple nodes)
of that patch.  This produces $O(n/D + kd/bT) = O(n \log n/T)$ blocks of 
size at most $O(bT)$ which can be much smaller than $k$. In phases
of $O(n \log n)$ rounds we then index $bT$ of these blocks or tokens
using pipelined flooding and broadcast them out using the network coded 
indexed-broadcast algorithm. This leads to an $O(\frac{1}{bT} \cdot \min\{k,\frac{n \log n}{T}\} + 1) \cdot n \log n$
round algorithm for $k$-token dissemination. This completes the results 
stated in Theorem~\ref{thm:stable}. 

For deterministic algorithms gathering is much harder. Considering the limitations
of token-forwarding, it seems unlikely that the gathering methods
that are based on the {\tt random-forward} primitive
can be derandomized. Nevertheless, we can
make the last gathering method deterministic by using the 
deterministic MIS algorithm from \cite{panconesi1992improved} once more.
This, together with the deterministic indexed-broadcast algorithm
from Lemma~\ref{lem:stabilityNCdet}, leads to an $O(n/D + kd/bT) / \sqrt{bT/\log n} \cdot O(n \cdot MIS(n)) = 
O(\frac{MIS(n)^2 \sqrt{\log n}}{\sqrt{bT}T} \cdot n^2 + n \cdot MIS(n))$ algorithm as stated in 
Theorem~\ref{thm:stabledeterministic}; here $MIS(n)$ is the time needed
to compute a maximum independent set in an $n$ node graph. 

Allowing centralized algorithms on the other hand alleviates many of these problems: indices can be assigned trivially and 
the coefficient overhead can be ignored since it is easy to infer the coefficients
from knowing the past topologies. This allows a randomized centralized algorithm
to distribute $n$ blocks of size $O(b)$ in $O(n)$ time and leads to a linear time
algorithm for the $k$-token dissemination problem as stated in Corollary \ref{cor:global}.
To obtain deterministic centralized algorithm we have to be 
more careful: A deterministic centralized algorithm that codes together $k$ tokens requires according to 
Corollary \ref{col:NCdeterministic} a field size $q = n^{k}$. In order
to describe one symbol in the $bT$-bit size blocks, that are used in the algorithm
developed in this section, at most $k = bT/\log n$ blocks of size $bT$ can be coded together.
We also note, that with central control the MIS computation becomes local and thus trivial. 
Putting all this together and using the third (deterministic) gathering technique leads to
the results stated in Corollary \ref{cor:global}.

\section{Conclusion}

We have applied \emph{network coding}
to distributed computing in dynamic networks.
We provided faster algorithms for distributed information dissemination
which, in several cases, work provably better than \emph{any} non-coding algorithm. 

Message size is an important parameter that 
was not fully accounted for in previous work: while extremely small
(logarithmic size) messages are a standard assumption in distributed computing,
prior work on network coding assumed exponentially larger, linear size 
messages. We mediate between these two assumptions using an explicit message size
and show that, contrary to the natural assumption that broadcast
should scale linearly with the message size, it can be made to scale \emph{quadratically} using
network coding. 

We also explore the range between fully dynamic and fully static networks, showing that in $T$-stable networks dissemination can be sped up by a factor of $T^2$ using network coding.  In contrast, the Kuhn et al. lower bound apply to such $T$-stable networks and show that knowledge-based token-forwarding algorithm can only offer a factor-$T$ speedup. Improving our patch-sharing algorithms to avoid the computation of a
maximum independent set and making them applicable to the
$T$-interval-connectivity model remains an interesting question.  So
far we can achieve this goal only if the topologies chosen by the
adversary are highly non-expanding.

Many of our algorithmic ideas can be extended beyond
the always-connected dynamic networks discussed in this paper to other
network and communication models~\cite{AnalyzingNC}. The same is true
for our results on omniscient adversaries or (non-uniformly) deterministic 
algorithms. 

We have shown that network coding outperforms token forwarding, but it is not clear whether we have made best-possible use of this technique.  Conceivably network coding can yield even better performance.  Unlike for token forwarding, there are no non-trivial lower bounds for general or network-coding based algorithms for $n$-token dissemination in the dynamic network model. Closing this gap is an intriguing open question.

\section*{Acknowledgments} We thanks Nancy Lynch and Rotem Oshman for introducing us to the dynamic network model.
We thank Muriel M\'{e}dard and Lizhong Zheng for interesting discussions. Lastly, 
we thank the anonymous reviewers for helpful comments.

\end{document}